\documentclass[opre,nonblindrev]{informs3}

\DoubleSpacedXI 

\usepackage{endnotes}
\usepackage{algorithm}
\usepackage{algpseudocode}
\let\footnote=\endnote
\let\enotesize=\normalsize
\def\notesname{Endnotes}%
\def\makeenmark{\hbox to1.275em{\theenmark.\enskip\hss}}
\def\enoteformat{\rightskip0pt\leftskip0pt\parindent=1.275em
  \leavevmode\llap{\makeenmark}}

\usepackage{natbib}
 \bibpunct[, ]{(}{)}{,}{a}{}{,}%
 \def\bibfont{\small}%

 \definecolor{myblue}{RGB}{0, 20, 114}
\usepackage[hidelinks,colorlinks=true,urlcolor=myblue,linkcolor=myblue,citecolor=myblue]{hyperref}
\def\EMAIL#1{\href{mailto:#1}{#1}}

\def\theARTICLETOP{}
\def\theLRHFirstLine{}
\def\theLRHSecondLine{}
\def\theRRHFirstLine{}
\def\theRRHSecondLine{}
\usepackage{tikz}
\usepackage{enumitem}
\usepackage{subcaption}

\TheoremsNumberedThrough     
\ECRepeatTheorems

\EquationsNumberedThrough    

\begin{document}


\RUNAUTHOR{Hu, Hu and Zhou}

\RUNTITLE{Quantum Monte Carlo Tree Search with Fixed
Confidence}

\TITLE{Quantum Monte Carlo Tree Search with Fixed
Confidence}

\ARTICLEAUTHORS{%
\AUTHOR{Mingjie Hu}
\AFF{School of Management, Fudan University \\
H. Milton Stewart School of Industrial and Systems Engineering, Georgia Institute of Technology\footnotemark[1]\\
\EMAIL{23110690009@m.fudan.edu.cn}}
\AUTHOR{Jian-Qiang Hu}
\AFF{School of Management, Fudan University, \EMAIL{hujq@fudan.edu.cn }}\AUTHOR{Enlu Zhou}
\AFF{H. Milton Stewart School of Industrial and Systems Engineering, Georgia Institute of Technology, \EMAIL{enlu.zhou@isye.gatech.edu}}
} 
\footnotetext[1]{This work was done during Mingjie Hu's visit at Georgia Institute of Technology.}

\ABSTRACT{%
Recent advances in quantum computing are opening new opportunities for computationally intensive decision problems. This paper studies how quantum computing can improve Monte Carlo tree search (MCTS) in the fixed-confidence setting, where the goal is to identify a near-optimal move in a given game tree with high probability while minimizing query complexity. We first formulate MCTS under a quantum oracle model. We then
develop a quantum MCTS algorithm (QMCTS) that combines threshold-based elimination on the search tree with quantum Monte Carlo estimation to reduce the cost of evaluating stochastic leaf values. We establish an instance-dependent lower bound on the query complexity and derive a corresponding upper bound for QMCTS. The lower-bound analysis introduces a new quantum phase-testing result that may also be useful beyond MCTS. We validate the theoretical results through simulation experiments and further demonstrate the feasibility of QMCTS on real quantum hardware.
}%


\KEYWORDS{Quantum computing, Monte Carlo tree search, Fixed confidence, Query complexity}
\maketitle

%


\section{Introduction}
\label{sec: intro}
Monte Carlo three search (MCTS) is a search framework aiming to find the optimal decision based on the search tree built by random sampling of the decision space \citep{shah2022nonasymptotic}. It has achieved remarkable success in sequential decision making with a tree representation, such as the game of Go \citep{silver2017mastering_}, chess, and shogi \citep{silver2017mastering}. The central statistical task is to compare candidate actions using noise simulation outcomes and take the optimal action. The effectiveness of MCTS, however, can be limited by the number of simulation required to identify the best move for a given node on the tree.  

Quantum computing has advanced rapidly and has shown promise in areas such as combinatorial optimization \citep{abbas2024challenges}, simulation optimization \citep{hu2025quantum,hu2026quantum}, and machine learning \citep{biamonte2017quantum}. For certain computational tasks, quantum algorithms can achieve substantial speedups over their classical counterparts. This naturally raises the following question: 
    \textit{What can quantum computing offer to MCTS?}
This question is relevant to a broad range of learning and decision-making problems in which MCTS serves as a key computational tool.

{\color{black}To answer this question, we adopt the fixed-confidence MCTS formulation of \citet{kaufmann2017monte} and consider a two-player zero-sum game in which the possible sequences of play are represented by a maximin game tree.} MAX nodes correspond to decisions made by the first player, while MIN nodes correspond to decisions made by the second player. The terminal payoffs are stochastic, and the objective is to identify a root move whose value is within $\epsilon>0$ of the optimal root value with probability at least $1-\delta$. 

However, designing a quantum MCTS algorithm raises several fundamental technical challenges:
\begin{itemize}
    \item \textbf{Quantum Information Extraction.} The advantage of quantum computing relies on processing information encoded in quantum superposition. However, extracting classical information (such as the mean estimate) requires \emph{measurement}, which collapses the quantum state and prevents its further use. This constraints makes many classical sequential algorithms unsuitable in the quantum setting and requires new algorithmic designs.
    \item \textbf{Tree-Structured Exploration.} The recursive MAX-MIN structure couples the leaf values, and the effect of each leaf depends on its position in the tree. This structural dependence makes it difficult to characterize the intrinsic difficulty of a problem instance and to obtain matching upper and lower bounds on the query complexity.
    \item \textbf{Algorithm Performance Analysis.} Classical query-complexity lower bounds often rely on change-of-measure arguments from sequential hypothesis testing \citep{lattimore2020bandit}. These arguments do not directly extend to the quantum setting, so new tools are needed to establish instance-dependent lower bounds for quantum algorithms.
\end{itemize}

To aid the reader, we provide some intuitive interpretation of the quantum concepts used in this paper. Classical MCTS typically uses a simulation model to generate stochastic rollouts and estimate the values of left nodes and candidate moves. In the quantum setting, the simulation model is accessed through a quantum oracle model, which serves as the quantum counterpart of a classical simulator. A query to a classical simulator at a leaf node returns a random observation, while a query to a quantum oracle prepares a quantum sample, which is a superposition state that encodes the possible outcomes. Measuring this state produces a classical sample but collapses the quantum state and destroys the quantum advantage. Accordingly, classical sample complexity is replaced by quantum query complexity as the main measure of computational cost. This restriction on information extraction motivates the \emph{lazy-measurement} principle used in our algorithm design.

In this work, we assume access to the quantum oracle model and focus on the design and analysis of MCTS algorithm. This assumption is natural when the underlying simulator is implemented by a classical program with accessible source code. Such a program can be represented as a Boolean circuit and implemented reversibly as a quantum circuit with polynomial overhead \citep{wang2021quantum2}. Similar assumptions are widely used in the quantum algorithm literature \citep{wang2021quantum, zhong2023provably,hu2025quantum}.

The main contributions of this paper are summarized as follows:

\begin{itemize}
    \item We formulate quantum MCTS in the fixed-confidence setting and propose a quantum MCTS algorithm (QMCTS) that combines quantum Monte Carlo (QMC) estimation with threshold-based elimination under a lazy-measurement principle. We prove that QMCTS identifies an $\epsilon$-optimal root move with probability at least $1-\delta$.

\item We establish an instance-dependent query-complexity upper bound for QMCTS and derive a lower bound for adaptive quantum algorithms. We characterize conditions under which the upper and lower bounds match up to logarithmic factors. Both analyses depend explicitly on the tree structure. 
The lower-bound analysis also introduces a new sequential quantum phase-testing result that may be useful beyond MCTS.

\item We conduct numerical experiments to verify the theoretical results and evaluate the performance of the proposed algorithms. We also test the methods on a large tree constructed from the Lichess database \citep{lichessOpenDatabase}. Based on this experiment, we develop a classical-quantum hybrid algorithm that combines the strengths of classical and quantum MCTS algorithms. Finally, we implement a small experiment on IBM quantum hardware to demonstrate the feasibility of QMCTS.
\end{itemize}

Our study is related to two strands of literature:

\textbf{Best Arm Identification.} Best arm identification (BAI) is one of the most extensively studied problem in the bandit literature \citep{kaufmann2016complexity,garivier2016optimal}. In the fixed-confidence setting, instance-dependent lower bounds and matching algorithms have been established \citep{degenne2019non,wang2021fast}. Recent work has also explored the use of quantum computing to improve BAI. \citet{casale2020quantum} first formulated quantum BAI based on amplitude amplification. \citet{wang2021quantum} established a quadratic speedup in query complexity under a strong quantum oracle model. More recently, \citet{wang2025quantum} studied BAI under quantum oracle models with different query structures. Our work differs from quantum BAI in two ways. First, we consider MCTS, where the goal is to identify the optimal move at the root, whose value is determined recursively by the MAX-MIN tree. This tree structure creates additional challenges for both algorithm design and query-complexity analysis. Second, we establish a new query-complexity lower bound for adaptive quantum MCTS algorithms. The BAI lower bound follows as a special case by considering a depth-one tree. 

\textbf{Monte Carlo Tree-Search.} Many classical MCTS methods, including UCT, are built on regret-minimizing bandit algorithms \citep{kocsis2006bandit}, while we consider a fixed-confidence formulation. \citet{kaufmann2017monte} studied fixed-confidence MCTS and developed BAI-based algorithms with instance-dependent guarantees. Our work instead studies how quantum computing can reduce the query complexity of MCTS.

Research on quantum acceleration for MCTS and related tree-search methods remains limited. \citet{peters2021quantum} applied quantum mean estimation to rollout values, while~\citet{sequeira2021quantum} studied quantum Sparse Sampling with coherent access to environment transitions. \citet{shukla2026coherent} constructed coherent rollout oracles to compare root actions through expected returns, and~\citet{finet2026quantum} used a neutral-atom subroutine in the expansion step of MCTS. These methods do not directly address our setting, where the tree is fixed, uncertainty arises from stochastic leaf values, and root decisions are determined by recursive MAX-MIN backups rather than rollout expectations. We further provide a complete query-complexity analysis, with instance-dependent upper and lower bounds for identifying an $\epsilon$-optimal root move.

\section{Quantum Monte Carlo Tree Search}
In this section, we first introduce the basic notation and concepts of quantum computing, and then formulate fixed-confidence MCTS under the quantum oracle model.

\subsection{Preliminaries on Quantum Computing}
\label{sec: preliminary}

Quantum computing exploits physical phenomena such as superposition, entanglement, and interference to process information. These mechanisms enable quantum parallelism and, when combined with suitable algorithmic procedures, can lead to significant speedups for certain computational tasks. We introduce the basic notation below and refer readers to \citep{nielsen2010quantum} for a comprehensive treatment. 

In quantum computing, Dirac's bra-ket notation is used to represent vectors and linear operations in a complex Hilbert space. A quantum state is represented by a \emph{ket} $|\psi\rangle\in\mathbb{C}^d$, and the corresponding \emph{bra}, denoted by $\langle\psi|$, is defined as the conjugate transpose of the ket, i.e., $\langle\psi|=|\psi\rangle^\dagger$. For two quantum states $|\psi\rangle$ and $|\phi\rangle$, the quantity $\langle\phi|\psi\rangle$ denotes their inner product, while $|\psi\rangle\langle\phi|$ denotes their outer product and defines a linear operator. Quantum operations are represented by unitary operators $U$ satisfying $U^\dagger U=UU^\dagger=I$, and their action on a quantum state is written as $|\psi\rangle\mapsto U|\psi\rangle$.

The basic unit of information in quantum computing is a \emph{qubit} (quantum bit). A classical bit takes one of two values, $0$ or $1$, while a qubit is described by the corresponding computational basis states $|0\rangle$ and $|1\rangle$. Unlike a classical bit, a qubit can exist in a superposition of these basis states and is generally written as
$
|\psi\rangle=\alpha|0\rangle+\beta|1\rangle,
$
where $\alpha,\beta\in\mathbb{C}$ are the corresponding probability amplitudes satisfying $|\alpha|^2+|\beta|^2=1$. Upon measurement, the qubit collapses to the state $|0\rangle$ with probability $|\alpha|^2$ and to the state $|1\rangle$ with probability $|\beta|^2$, respectively.

\subsection{Problem Formulation}

Consider a fixed game tree $\mathcal{T}$ with root node $s_0$. Each node is labeled as either MAX or MIN, with $s_0$ being a MAX node. For each node $s$, let $\mathcal C(s)$ denote the set of its children. Let $\mathcal L$ be the set of leaf nodes and $L:=|\mathcal L|$. Each leaf $\ell\in\mathcal L$ is associated with a stochastic oracle that represents the leaf evaluation when $\ell$ is reached by the MCTS algorithm. For ease of presentation, we consider Bernoulli leaf oracles with unknown means $\mu_\ell\in(0,1)$, where $\mu_\ell$ represents the probability that the first player wins the game from leaf $\ell$. {\color{black}Let $\mu := (\mu_\ell)_{\ell\in\mathcal L}\in(0,1)^L$ denote the vector of leaf means. The proposed algorithms extend directly to leaf distributions supported on $[0,1]$ (see Appendix \ref{sec: distribution} for details).}

The \emph{value} of each node $s$ (for the first player) is defined recursively as a function of $\mu$
\begin{equation}
V_s(\mu)=
\begin{cases}
\mu_s,
& s\in\mathcal{L},\\[0.2em]
\displaystyle\max_{c\in\mathcal{C}(s)} V_c(\mu),
& s\text{ is a MAX node},\\[0.2em]
\displaystyle\min_{c\in\mathcal{C}(s)} V_c(\mu),
& s\text{ is a MIN node}.
\end{cases}
\label{eq:value-recursion}
\end{equation}
We suppress the dependence on $\mu$ when it is clear from the context. The optimal root move is the child of $s_0$ with the highest value
\begin{equation}
s^\star
\in
\operatorname*{arg\,max}_{s\in\mathcal{C}(s_0)}
V_s.
\end{equation}
with ties resolved according to a fixed ordering.

To identify $s^\star$ or an $\epsilon$-optimal root move, an algorithm adaptively queries leaf nodes and observes their stochastic outcomes. After a random number $\tau$ of queries, the algorithm stops and recommends a move $\widehat{s}\in\mathcal C(s_0)$. Given a confidence level $\delta\in(0,1)$ and a precision parameter $\epsilon\in(0,1]$, the objective is to identify a move in $\mathcal C(s_0)$ whose value is within $\epsilon$ of the optimal root value with probability at least $1-\delta$, i.e.,
\begin{equation*}
    \mathbb{P}(V_{s_0} - V_{\widehat{s}}\le \epsilon)\ge 1-\delta,
\end{equation*}
while minimizing the total number of queries. We formally introduce the quantum oracle model in Definition~\ref{def: oracle}.

\begin{definition}[Quantum oracle model]
\label{def: oracle}
For each leaf $\ell\in\mathcal L$ with Bernoulli mean $\mu_\ell\in(0,1)$, we assume access to a unitary quantum oracle $U_\ell$ satisfying
\begin{equation}
U_{\ell}\lvert 0\rangle
=
\sqrt{1-\mu_{\ell}}\,
\lvert 0\rangle
+
\sqrt{\mu_{\ell}}\,
\lvert 1\rangle.
\label{eq:general-leaf-oracle}
\end{equation}
The algorithm may query $U_\ell$ or its inverse $U_\ell^\dagger$, and each call is counted as one quantum query.
\end{definition}
The quantum oracle model $U_\ell$ is the counterpart of a classical stochastic simulator. If $U_\ell|0\rangle$ is measured immediately, the outcome is $1$ with probability $\mu_\ell$ and $0$ with probability $1-\mu_\ell$, which reproduces one classical Bernoulli sample from leaf $\ell$. Before measurement, the quantum state can be processed coherently. 

We next introduce several quantities that characterize the difficulty of the tree-search problem and will be used in the query-complexity analysis. Define the root performance gap
\[
\Delta_\star = V_{s^\star} - \max_{s\in \mathcal C(s_0)\setminus \{s^\star\}}V_s \ge 0,
\]
which measures the difference between the optimal root move and the best competing move.
For an edge $(s,c)$ directed from a node $s$ to its child $c$, define the local edge gap
\[
g(s,c)=|V_s-V_c|
=
\begin{cases}
V_s-V_c, & s \text{ is a MAX node},\\
V_c-V_s, & s \text{ is a MIN node}.
\end{cases}
\]

For a leaf $\ell\in \mathcal L$, let $\text{path}(\ell)$ denote the set of edges on the path from the root to $\ell$, and define
\[
\Delta_\ell = \max_{(s,c)\in \text{path}(\ell)} g(s,c),
\]
which is the largest local edge gap along the path of $\ell$.

\section{Quantum Algorithm and Query Complexity Analysis}
In this section, we first develop the quantum MCTS algorithm and its classical counterpart. We then provide a complete query-complexity analysis to characterize the resulting quantum advantage.

\subsection{Quantum algorithm design}
To develop an efficient quantum algorithm for MCTS, we consider three key issues: (a) using quantum methods to improve the accuracy of leaf-value evaluation; (b) designing the algorithm under the lazy-measurement principle; and (c) exploiting the tree structure to design an effective elimination rule.
We first introduce the QMC estimator~\citep{kothari2023mean} in Lemma \ref{lem: QMC} and use it as a basic subroutine of QMCTS for leaf-value estimation. Compared with classical Monte Carlo method, QMC achieves a quadratic improvement in the dependence on the target accuracy while preserving the same confidence guarantee.

\begin{lemma}
\label{lem: QMC}
For each leaf $\ell\in\mathcal L$, there exists a quantum algorithm, denoted by
$\mathrm{QMC}(U_\ell,\alpha,\eta)$, that uses
$
O(\frac{1}{\alpha}\log\frac{1}{\eta})
$
queries to $U_\ell$ and returns an estimate $\widehat{\mu}_\ell$ satisfying
\[
\mathbb{P}\left(
    \left|\widehat{\mu}_\ell-\mu_\ell\right|>\alpha
\right)
\le \eta.
\]
\end{lemma}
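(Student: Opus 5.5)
The plan is to build $\mathrm{QMC}(U_\ell,\alpha,\eta)$ from amplitude estimation followed by a median-of-means confidence boost. For Bernoulli leaves this is enough; the general bounded case follows from the cited result of \citet{kothari2023mean}.

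\textbf{Step 1: Grover iterate.} Consider the reflections $S_0 = I - 2|0\rangle\langle 0|$ and $S_1 = I-2|1\rangle\langle 1|$ on the single qubit. Define
\[
Q_\ell = -U_\ell S_0 U_\ell^\dagger S_1 .
\]
Each application of $Q_\ell$ costs two queries, one to $U_\ell$ and one to $U_\ell^\dagger$. Write $\mu_\ell=\sin^2\theta_\ell$ with $\theta_\ell\in(0,\pi/2)$. The standard two-dimensional rotation argument shows that, on the span of $U_\ell|0\rangle$ and its orthogonal complement in $\mathbb C^2$, $Q_\ell$ acts as a rotation by angle $2\theta_\ell$. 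Its eigenvalues are therefore $e^{\pm 2i\theta_\ell}$, and $U_\ell|0\rangle$ is an equal-weight superposition of the two eigenvectors.

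\textbf{Step 2: Phase estimation.} Run quantum phase estimation with $m$ ancilla qubits on $Q_\ell$, applied to $U_\ell|0\rangle$. Controlled powers $Q_\ell^{2^j}$ for $j<m$ are needed, so the total query count is $O(M)$ with $M=2^m$. Set $\tilde\mu=\sin^2\tilde\theta$ from the measured phase. The Brassard--Høyer--Mosca--Tapp analysis then gives
\[
|\tilde\mu-\mu_\ell|\le \frac{2\pi\sqrt{\mu_\ell(1-\mu_\ell)}}{M}+\frac{\pi^2}{M^2}
\]
with probability at least $8/\pi^2$. Choosing $M=\lceil c/\alpha\rceil$ for a suitable constant $c$ (say $c=2\pi+\pi$) makes the error at most $\alpha$ with probability at least $8/\pi^2>3/4$, using $O(1/\alpha)$ queries. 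The step needing the most care is this error bound. It comes from the Fejér-kernel tail of phase estimation together with the Lipschitz bound $|\sin^2 a-\sin^2 b|\le|a-b|\cdot|\sin(a+b)|$. Since I may cite it as standard, I would quote it rather than rederive it.

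\textbf{Step 3: Median amplification.} Repeat Step 2 independently $k=\lceil 18\ln(1/\eta)\rceil$ times. Each run is a fresh circuit followed by a measurement, so the repetitions are independent. Output the median $\widehat\mu_\ell$ of the $k$ estimates. The median can deviate by more than $\alpha$ only if at least $k/2$ runs fail, and each run fails with probability at most $1/4$. Hoeffding's inequality therefore bounds this event by $\exp(-2k(1/4)^2)\le\eta$. The total cost is $O(k/\alpha)=O(\alpha^{-1}\log(1/\eta))$ queries, as claimed.

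The only nonroutine ingredient is the constant-probability guarantee of amplitude estimation in Step 2, which I would take from the standard literature. The remaining obstacle is bookkeeping: making sure that inverse queries are counted, which the definition allows, and that controlled versions of $U_\ell$ cost $O(1)$ queries each.
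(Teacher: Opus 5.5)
Your argument is correct, but it takes a different route from the paper. The paper gives no proof of this lemma: it simply invokes the general quantum mean estimator of \citet{kothari2023mean}. That estimator reaches accuracy $\alpha$ with confidence $1-\eta$ using $O(\frac{\sigma}{\alpha}\log\frac{1}{\eta})$ queries for any random variable with standard deviation $\sigma$, so the lemma follows from $\sigma\le 1/2$. You instead build the estimator directly for the Bernoulli oracle of Definition~\ref{def: oracle}. Brassard--H{\o}yer--Mosca--Tapp amplitude estimation gives accuracy $\alpha$ with probability $8/\pi^2>3/4$ from $O(1/\alpha)$ queries. A median of $k=O(\log(1/\eta))$ independent runs then boosts the confidence; your Hoeffding step needs only $k\ge 8\ln(1/\eta)$, so $18\ln(1/\eta)$ is ample. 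The citation buys generality: variance-adaptive rates and non-Bernoulli leaves, which the paper appeals to in Appendix~\ref{sec: distribution}. Your route is self-contained, has explicit constants, and is essentially the subroutine the paper simulates in its experiments. For $[0,1]$-valued leaves it also extends via the usual ancilla-rotation encoding, so you lose only variance adaptivity, which this lemma does not need.

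Two bookkeeping remarks. First, your worry about controlled versions of $U_\ell$ goes away. Controlled-$(U_\ell S_0 U_\ell^\dagger)$ equals $(I\otimes U_\ell)(\text{controlled-}S_0)(I\otimes U_\ell^\dagger)$, and $-I$ and $S_1$ are known gates. Hence each controlled-$Q_\ell$ costs one uncontrolled $U_\ell$ and one $U_\ell^\dagger$. This matters, because Definition~\ref{def: oracle} grants only $U_\ell$ and $U_\ell^\dagger$, and a controlled version of a black-box unitary cannot in general be synthesized from it. Second, with $m$-qubit phase estimation you should take $M=2^{\lceil\log_2(c/\alpha)\rceil}\le 2c/\alpha$ rather than $\lceil c/\alpha\rceil$, or use the Fourier transform over $\mathbb{Z}_M$ as Brassard et al.\ do. Also, the cost is really $O(\alpha^{-1}(1+\log(1/\eta)))$, which is harmless since $\eta_r<1/4$ in QMCTS.
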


However, naively replacing classical Monte Carlo estimates with QMC estimates is not sufficient to obtain a quantum speedup for tree search. The main difficulty comes from the restriction on information extraction in quantum computing. Classical sequential MCTS methods \citep{kaufmann2017monte} repeatedly use intermediate estimates to decide which nodes to sample or eliminate. In the quantum setting, obtaining such estimates requires measurement, which collapses the quantum state and may reduce the query advantage of QMC. On the other hand, delaying the elimination of clearly suboptimal moves can lead to many unnecessary queries. Therefore, a quantum MCTS algorithm needs to balance measurement frequency and adaptivity.

We design the quantum algorithm following a lazy-measurement principle. The algorithm proceeds in iterations, and the quantum state is measured only at the end of each iteration to obtain updated estimates. At iteration $r$, we first estimate the value of each active leaf and then propagate these estimates upward through the tree. Specifically, define
\begin{equation}
\widehat{V}_{s,r}
=
\begin{cases}
\widehat{\mu}_{s,r},
& s\in\mathcal{L}_{r-1},\\[0.3em]
\displaystyle
\max_{c\in\mathcal{C}_{r-1}(s)}
\widehat{V}_{c,r},
& s\text{ is a MAX node},\\[0.5em]
\displaystyle
\min_{c\in\mathcal{C}_{r-1}(s)}
\widehat{V}_{c,r},
& s\text{ is a MIN node},
\end{cases}
\label{eq:value-propagation}
\end{equation}
where $\mathcal{L}_{r-1}$ is the set of active leaves and
$\mathcal{C}_{r-1}(s)$ is the active children of node $s$ at the
beginning of iteration $r$. 

For an active edge $(s,c)$, define the empirical local gap by
\begin{equation}
\widehat{g}_r(s,c)
=
\begin{cases}
\widehat{V}_{s,r}-\widehat{V}_{c,r},
& s\text{ is a MAX node},\\[0.3em]
\widehat{V}_{c,r}-\widehat{V}_{s,r},
& s\text{ is a MIN node},
\end{cases}
\label{eq:empirical-local-gap}
\end{equation}
which measures how far the child $c$ is from the currently optimal child of node $s$. 

{\color{black}We use the geometrically decreasing threshold $\gamma_r = 2^{-r}$ to eliminate suboptimal nodes. In each iteration, the internal nodes are processed in reverse topological order. For each active internal node $s$, any child $c$ satisfying $\widehat{g}_r(s,c)>\gamma_r$ is removed together with all of its descendants. The remaining nodes then form the active subtree $\mathcal{T}_r$, with active leaf set $\mathcal{L}_r$. We set the QMC precision and confidence parameters to $\alpha_r=\gamma_r/2$ and $\eta_r=\delta/(2Lr^2)$, respectively. These choices allow clearly suboptimal nodes to be eliminated in early iterations without excessive query costs, while ensuring that all estimates used by the algorithm are accurate with probability at least $1-\delta$ (see Appendix \ref{app: query complexity} for details). This high-probability event is used to establish the correctness of the algorithm.} 

After the elimination step, let $\widehat{s}_r$ denote the remaining root child with the largest estimated value. The algorithm stops when either only one root move remains or $\gamma_r\le\epsilon$, and returns $\widehat{s}_r$. We call this algorithm QMCTS and summarize it in Algorithm~\ref{alg:qte-mcts}.

\begin{algorithm}[htpb]
\small
\caption{Quantum Monte Carlo Tree Search (QMCTS)}
\label{alg:qte-mcts}
\begin{algorithmic}[1]

\Require Finite MAX--MIN tree $\mathcal{T}$,
accuracy $\epsilon\in(0,1]$, and confidence
$\delta\in(0,1/2)$

\State $\mathcal{T}_0\gets\mathcal{T}$ and
$L\gets|\mathcal{L}|$

\For{$r=1,2,\ldots$}

    \State $\gamma_r\gets 2^{-r}$,
    $\alpha_r\gets\gamma_r/2$, and
    $\eta_r\gets\delta/(2Lr^2)$

    \ForAll{$\ell\in\mathcal{L}_{r-1}$}
        \State
        $\widehat{\mu}_{\ell,r}
        \gets\operatorname{QMC}(U_\ell,\alpha_r,\eta_r)$
    \EndFor

    \State Propagate $\widehat{V}_{s,r}$ upward using
    (\ref{eq:value-propagation})

    \ForAll{internal nodes $s$, in reverse topological order}
        \State Remove each child $c$ satisfying
        $\widehat{g}_r(s,c)>\gamma_r$
    \EndFor

    \State Let $\mathcal{T}_r$ be the remaining connected subtree,
    with active leaf set $\mathcal{L}_r$

    \State
    $\widehat{s}_r
    \in
    \operatorname*{arg\,max}_{s\in\mathcal{C}_r(s_0)}
    \widehat{V}_{s,r}$

    \If{$|\mathcal{C}_r(s_0)|=1$ \textbf{or}
        $\gamma_r\le\epsilon$}
        \State \Return $\widehat{s}_r$
    \EndIf
\EndFor
\end{algorithmic}
\end{algorithm}

For comparison, we also develop a classical counterpart of QMCTS, called CMCTS, summarized in Algorithm~\ref{alg:classical-reuse-mcts}. CMCTS follows the same elimination framework as QMCTS. Its main advantage is that samples collected in previous iterations can be reused when updating the leaf estimates. {\color{black}At each iteration, given the same precision $\alpha_r$ and confidence $\eta_r$, CMCTS uses Hoeffding's inequality to determine the 
the number of additional samples required for each active leaf.}

{\color{black}This motivates a Hybrid MCTS method that uses classical estimates until a fresh QMC estimate becomes more query-efficient than further classical sampling for each leaf. Since the query cost for each leaf depends only on the precision and confidence parameters, this comparison can be made before querying the oracles (see Algorithm \ref{alg:hybrid-mcts} for details).
The hybrid method avoids costly quantum estimation during coarse elimination while retaining the quantum query advantage at high precision.}

\begin{algorithm}[t]
\caption{Classical Monte Carlo Tree Search (CMCTS)}
\label{alg:classical-reuse-mcts}
\begin{algorithmic}[1]

\Require Finite MAX-MIN tree $\mathcal{T}$, precision
$\epsilon\in(0,1]$, and confidence $\delta\in(0,1/2)$

\State $\mathcal{T}_0\gets\mathcal{T}$ and
$L\gets|\mathcal{L}|$
\State $N_{\ell}\gets0$ for every $\ell\in\mathcal{L}$

\For{$r=1,2,\ldots$}

    \State $\gamma_r\gets2^{-r}$,
    $\alpha_r\gets\gamma_r/2$, and
    $\eta_r\gets\delta/(2Lr^2)$

    \State
    $n_r\gets
    \left\lceil
    \log(2/\eta_r)/(2\alpha_r^2)
    \right\rceil$

    \ForAll{$\ell\in\mathcal{L}_{r-1}$}

        \State
        $\Delta_{\ell}\gets
        \max\{0,n_r-N_{\ell}\}$

        \State Query leaf $\ell$ for
        $\Delta_{\ell}$ additional observations and retain them

        \State
        $N_{\ell}\gets N_{\ell}+\Delta_{\ell}$

        \State Construct $\widehat{\mu}^{\mathrm C}_{\ell,r}$
        as the empirical mean of all
        $N_{\ell}$ retained observations

    \EndFor

    \State Propagate $\widehat{V}^{\mathrm C}_{s,r}$ upward using (\ref{eq:value-propagation})

    \ForAll{internal nodes $s$, in reverse topological order}

        \State Remove each child $c$ satisfying
        $\widehat{g}^{\mathrm C}_r(s,c)>\gamma_r$

    \EndFor

    \State Let $\mathcal{T}_r$ be the remaining connected subtree,
    with active leaf set $\mathcal{L}_r$

    \State
    $\widehat{s}^{\mathrm C}_r
    \in
    \operatorname*{arg\,max}_{s\in\mathcal{C}_r(s_0)}
    \widehat{V}^{\mathrm C}_{s,r}$

    \If{$|\mathcal{C}_r(s_0)|=1$ \textbf{or}
        $\gamma_r\le\epsilon$}

        \State \Return $\widehat{s}^{\mathrm C}_r$

    \EndIf

\EndFor

\end{algorithmic}
\end{algorithm}

\subsection{Query Complexity Analysis}

In this subsection, we analyze the query complexity of QMCTS and CMCTS. We first derive the upper bounds.

For each leaf $\ell \in\mathcal L$ and iteration $r \ge 1$, define the bad estimation event
\begin{equation*}
F_{\ell, r} := \{\ell \in \mathcal L_{r-1}: |\widehat \mu_{\ell, r}-\mu_{\ell}|>\alpha_r \}.
\end{equation*}
If $\ell$ has already been eliminated, we set $F_{\ell,r}=\emptyset$. We then define the good event
\[
\mathcal G = \bigcap_{r\ge1}\bigcap_{\ell\in \mathcal L}F^c_{\ell, r},
\]
under which every estimate used by the algorithm is within its prescribed precision $\alpha_r$. Lemma~\ref{lem: good event} shows that the good event holds with probability at least $1-\delta$.

\begin{lemma}
\label{lem: good event}
The probability of the good event $\mathcal G$ is at least $1-\delta$, that is,
$\mathbb {P}(\mathcal G^c) \le \delta.
$
\end{lemma}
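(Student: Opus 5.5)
We prove the bound by a union bound over leaves and iterations, using the per-call guarantee of Lemma~\ref{lem: QMC} for each QMC estimate. By definition,
\[
\mathcal G^c=\bigcup_{r\ge1}\bigcup_{\ell\in\mathcal L}F_{\ell,r},
\qquad\text{so}\qquad
\mathbb P(\mathcal G^c)\le\sum_{r\ge1}\sum_{\ell\in\mathcal L}\mathbb P(F_{\ell,r}).
\]
The sum over $r$ is taken over all $r\ge1$, with $F_{\ell,r}=\emptyset$ whenever leaf $\ell$ is inactive or the algorithm has already stopped. This avoids any need to control the random stopping iteration.

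\textbf{Key step: bounding each term despite adaptivity.} Whether $\ell\in\mathcal L_{r-1}$ depends on earlier estimates, so we condition on the history $\mathcal F_{r-1}$, the $\sigma$-field generated by all measurement outcomes in iterations $1,\dots,r-1$. The event $\{\ell\in\mathcal L_{r-1}\}$ is $\mathcal F_{r-1}$-measurable. On this event, the algorithm calls $\mathrm{QMC}(U_\ell,\alpha_r,\eta_r)$ afresh. This call uses new oracle queries on a freshly prepared register, and the parameters $\alpha_r,\eta_r$ are deterministic. By Lemma~\ref{lem: QMC}, its error exceeds $\alpha_r$ with conditional probability at most $\eta_r$. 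Hence
\[
\mathbb P(F_{\ell,r})=\mathbb E\!\left[\mathbf 1\{\ell\in\mathcal L_{r-1}\}\,\mathbb P\bigl(|\widehat\mu_{\ell,r}-\mu_\ell|>\alpha_r\,\big|\,\mathcal F_{r-1}\bigr)\right]\le \eta_r=\frac{\delta}{2Lr^2}.
\]

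\textbf{Summing.} Substituting this into the union bound gives
\[
\mathbb P(\mathcal G^c)\le\sum_{r\ge1}L\cdot\frac{\delta}{2Lr^2}=\frac{\delta}{2}\cdot\frac{\pi^2}{6}\approx0.82\,\delta\le\delta.
\]

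\textbf{Main obstacle.} The only delicate point is justifying the conditional guarantee in the key step. The lazy-measurement design makes it hold: each QMC call acts on a fresh register with $U_\ell$ fixed, and the history enters only classically, through the active set and the parameters.

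For this we need the guarantee of Lemma~\ref{lem: QMC} to hold for every input state of the form $|0\rangle$ and every $\mu_\ell$. It then applies conditionally on any classical history, because measurement outcomes from earlier iterations do not affect the quantum state prepared in iteration $r$.
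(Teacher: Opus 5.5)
Your proposal is correct and takes essentially the same approach as the paper's proof. Both condition on the history $\mathcal F_{r-1}$ so that Lemma~\ref{lem: QMC} gives $\mathbb P(F_{\ell,r}\mid\mathcal F_{r-1})\le \mathbb I(\ell\in\mathcal L_{r-1})\,\eta_r\le\eta_r$, then union-bound over all $r\ge1$ and $\ell\in\mathcal L$ to obtain $\frac{\pi^2}{12}\delta<\delta$; your remarks on why the conditional QMC guarantee survives adaptivity make explicit what the paper leaves implicit.
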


On the good event $\mathcal G$, we can further establish two auxiliary results, as detailed in Appendix \ref{app: query complexity}. These results show that the estimation error remains controlled as values are propagated up the tree, and that the elimination step removes only suboptimal nodes while preserving at least one optimal child at every node. 

Based on these results, Theorem~\ref{thm: corectness} shows that QMCTS identifies an $\epsilon$-optimal move with probability at least $1-\delta$.

\begin{theorem}
\label{thm: corectness}
    With probability at least $1-\delta$, QMCTS stops after finitely many iterations and returns an $\epsilon$-optimal move for $s_0$, i.e.,
    \begin{equation*}
    \mathbb{P}(V_{s_0} - V_{\widehat{s}}\le \epsilon)\ge 1-\delta.
\end{equation*}
\end{theorem}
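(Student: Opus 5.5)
We work on the good event $\mathcal G$, which by Lemma~\ref{lem: good event} has probability at least $1-\delta$, and show that on $\mathcal G$ the algorithm terminates and returns an $\epsilon$-optimal move. Termination is immediate and deterministic: the stopping rule fires at the latest at the first $r$ with $2^{-r}\le\epsilon$, i.e.\ $r\le\lceil\log_2(1/\epsilon)\rceil$. Each iteration makes finitely many QMC calls, so the algorithm always stops after finitely many iterations. The substance is correctness.

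\textbf{Step 1 (error propagation).} By induction on height in the active subtree $\mathcal T_{r-1}$, we show that on $\mathcal G$, $|\widehat V_{s,r}-V_s|\le\alpha_r$ for every active node $s$. This requires the invariant that every active internal node retains at least one child $c$ with $V_c=V_s$, so that $V_s$ is the max (resp.\ min) over active children only.
\begin{itemize}
\item At leaves the bound is exactly the event $F_{\ell,r}^c$.
\item At a MAX node, $\max$ is $1$-Lipschitz in sup norm, so the max over active children of values within $\alpha_r$ of their true values is within $\alpha_r$ of the max of true values over active children. By the invariant, that max equals $V_s$.
\item The MIN case is symmetric.
\end{itemize}

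\textbf{Step 2 (optimal children survive).} We prove the invariant by induction on $r$. Suppose it holds at the start of iteration $r$. Take an active node $s$ and an optimal active child $c^*$ with $V_{c^*}=V_s$. Step~1 gives
\[
\widehat g_r(s,c^*)\le|\widehat V_{s,r}-V_s|+|\widehat V_{c^*,r}-V_{c^*}|\le 2\alpha_r=\gamma_r,
\]
so $c^*$ is not removed.

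The delicate point, and what I expect to be the main obstacle, is that elimination proceeds in reverse topological order within the same iteration. Removing a grandchild changes which children are active, which could in principle alter $\widehat V_{c,r}$ mid-iteration. I would handle this by fixing the convention that all gaps in iteration $r$ use the values $\widehat V_{\cdot,r}$ computed before any removal, as in (\ref{eq:value-propagation}) with $\mathcal C_{r-1}$. Even if values are recomputed, removals never touch an optimal child, so every optimal-path node keeps a witness and the Step~1 bounds still hold.

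Hence $s^\star$, or some root child with value $V_{s_0}$, is never eliminated on $\mathcal G$. The argument also shows that any eliminated child $c$ satisfies $g(s,c)>\gamma_r-2\alpha_r\ge0$, so it is genuinely suboptimal.

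\textbf{Step 3 (returned move).} We consider the two stopping cases.
\begin{itemize}
\item If $|\mathcal C_r(s_0)|=1$, the surviving child must be an optimal one by Step~2, so $V_{\widehat s}=V_{s_0}$.
\item If $\gamma_r\le\epsilon$, let $c^*$ be an active optimal root child. By definition of $\widehat s_r$ we have $\widehat V_{\widehat s_r,r}\ge\widehat V_{c^*,r}$. Step~1 then gives
\[
V_{\widehat s_r}\ge \widehat V_{\widehat s_r,r}-\alpha_r\ge \widehat V_{c^*,r}-\alpha_r\ge V_{s_0}-2\alpha_r=V_{s_0}-\gamma_r\ge V_{s_0}-\epsilon.
\]
\end{itemize}
This requires Step~1 to apply to root children that remain active after elimination, which holds since they were active at the start of iteration $r$.

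In both cases $V_{s_0}-V_{\widehat s}\le\epsilon$ on $\mathcal G$, so $\mathbb P(V_{s_0}-V_{\widehat s}\le\epsilon)\ge\mathbb P(\mathcal G)\ge1-\delta$.
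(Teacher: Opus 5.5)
Your proposal is correct and follows essentially the same argument as the paper: work on the good event, propagate the error bound by induction on height, use the calibration $2\alpha_r=\gamma_r$ to show an optimal child of every active node survives (and only strictly suboptimal children are removed), then handle the two stopping cases with the same inequality chain. The only cosmetic difference is that the paper first compares $\widehat V_{s,r}$ with the active-subtree value $V_s^{(r-1)}$ and then identifies it with $V_s$ via Lemma~\ref{lem:safe-threshold}, whereas you fold that invariant directly into the induction.
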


Theorem~\ref{thm: query-complexity} establishes an upper bound on the query complexity of QMCTS.

\begin{theorem}
\label{thm: query-complexity}
With probability at least $1-\delta$, the total number of quantum
oracle queries used by QMCTS satisfies
\begin{equation}
\begin{aligned}
Q
=O\left(
\sum_{\ell\in\mathcal L}
\frac{1}{d_{\ell,\epsilon}}
\left[
\log\frac{2L}{\delta}
+
2\log\left(
2+\log_2\frac{1}{d_{\ell,\epsilon}}
\right)
\right]\right),
\end{aligned}
\label{eq:query-complexity-upper-bound}
\end{equation}
where
$
d_{\ell,\epsilon}
=
\Delta_\ell\vee\Delta_\star\vee\epsilon.
$
\end{theorem}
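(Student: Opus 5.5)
On the good event $\mathcal G$ of Lemma~\ref{lem: good event}, which has probability at least $1-\delta$, I will show that each leaf $\ell$ stays active only up to an iteration $r_\ell$ with $2^{-r_\ell}=\Theta(d_{\ell,\epsilon})$. I will then sum the QMC costs of Lemma~\ref{lem: QMC} over the iterations in which $\ell$ is queried.

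\emph{Step 1: error propagation.} On $\mathcal G$, every active leaf satisfies $|\widehat\mu_{\ell,r}-\mu_\ell|\le\alpha_r$. Max and min are $1$-Lipschitz in the sup norm, and the auxiliary correctness results guarantee that every active node keeps an optimal child. Hence, by induction in reverse topological order, $|\widehat V_{s,r}-V_s|\le\alpha_r=\gamma_r/2$ for every active node $s$. It follows that for an active edge $(s,c)$,
\[
\widehat g_r(s,c)\ge g(s,c)-2\alpha_r=g(s,c)-\gamma_r .
\]

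\emph{Step 2: elimination time per leaf.} Two mechanisms deactivate $\ell$.

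First, suppose $\Delta_\ell>0$, and let $(s,c)\in\mathrm{path}(\ell)$ be the edge achieving $\Delta_\ell$. If $\ell$ is still active at iteration $r$ with $2\gamma_r<\Delta_\ell$, then $\widehat g_r(s,c)>\gamma_r$ by Step 1. So $c$, and with it $\ell$, is removed, unless an ancestor edge of $(s,c)$ was removed first, which also deactivates $\ell$. Thus $\ell$ is not queried after the first $r$ with $2^{-r}<\Delta_\ell/2$, which gives $2^{-r}\ge\Delta_\ell/4$ for every iteration in which $\ell$ is queried.

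Second, take the root gap. Once $2\gamma_r<\Delta_\star$, every suboptimal root child $s'$ has $\widehat g_r(s_0,s')\ge V_{s^\star}-V_{s'}-\gamma_r\ge\Delta_\star-\gamma_r>\gamma_r$. So all such children are removed, $|\mathcal C_r(s_0)|=1$, and the algorithm stops.

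Third, the stopping rule $\gamma_r\le\epsilon$ ends the algorithm once $2^{-r}\le\epsilon$.

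Combining the three, every iteration $r$ in which $\ell$ is queried satisfies $2^{-r}\ge c\,d_{\ell,\epsilon}$ for an absolute constant $c$ (say $1/4$). Equivalently, $r\le R_\ell:=\lceil\log_2(4/d_{\ell,\epsilon})\rceil$.

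\emph{Step 3: summation.} By Lemma~\ref{lem: QMC}, the cost at iteration $r$ is $O(2^{r}\log(2Lr^2/\delta))$. Summing the geometric series up to $R_\ell$ gives
\[
O\Bigl(2^{R_\ell}\bigl[\log(2L/\delta)+2\log R_\ell\bigr]\Bigr)=O\Bigl(\frac{1}{d_{\ell,\epsilon}}\Bigl[\log\frac{2L}{\delta}+2\log\Bigl(2+\log_2\frac{1}{d_{\ell,\epsilon}}\Bigr)\Bigr]\Bigr),
\]
using $R_\ell\le 2+\log_2(1/d_{\ell,\epsilon})+1$ and the fact that the logarithmic term is monotone in $r$. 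Summing over $\ell\in\mathcal L$ yields \eqref{eq:query-complexity-upper-bound}.

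\emph{Main obstacle.} I expect Step 2 to be the hardest part, specifically the claim that the edge achieving $\Delta_\ell$ is actually eliminated. The difficulty is that the estimates $\widehat V$ are computed on the active subtree, not the full tree. The argument requires that the active subtree still contains the optimal child of every node, so that the active values coincide with the true $V_s$ up to $\alpha_r$. This is exactly where I would invoke the auxiliary correctness lemma that elimination never removes every optimal child. Ties with $g=0$ need no special treatment: they yield $\Delta_\ell=0$, and the bound then falls back on $\Delta_\star\vee\epsilon$.
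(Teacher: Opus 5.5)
Your proposal is correct and follows essentially the same route as the paper. On $\mathcal G$, both use Lemmas~\ref{lem:value-estimation} and~\ref{lem:safe-threshold} to get $|\widehat V_{s,r}-V_s|\le\alpha_r$, then show that every iteration in which $\ell$ is queried satisfies $2^{-r}\ge d_{\ell,\epsilon}/4$ via the path gaps, the root gap, and the $\epsilon$-stopping rule, and finally sum the geometric QMC costs. The only difference is the direction of the argument. The paper reasons from survival through iteration $R_\ell-1$: surviving path edges and the presence of two root moves force $d_{\ell,\epsilon}\le 2\gamma_{R_\ell-1}$. You argue the contrapositive (large gaps force elimination or stopping), with slightly looser constants.
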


The query complexity is determined by the effective difficulty $d_{\ell,\epsilon}$ of each leaf. QMCTS scales linearly with the inverse difficulty, up to logarithmic factors. 

We also establishes the correctness of CMCTS and derive its query-complexity upper bound; see Theorems \ref{thm:cmcts-correctness} and \ref{thm:cmcts-query-complexity} in Appendix \ref{sec:CMCTS}. The classical query-complexity upper bound is
\begin{equation}
\label{eq: cmcts bound}
    O\left(
        \sum_{\ell\in\mathcal L}
        \frac{1}{d_{\ell,\epsilon}^2}
        \left[
            \log\frac{4L}{\delta}
            +
            2\log\left(
                2+\log_2\frac{1}{d_{\ell,\epsilon}}
            \right)
        \right]
    \right).
\end{equation}
Comparing (\ref{eq:query-complexity-upper-bound}) and (\ref{eq: cmcts bound}), QMCTS improves the dependence on the effective difficulty $d_{\ell,\epsilon}$ of each leaf from quadratic to linear. Thus, QMCTS achieves a quadratic speedup up to logarithmic factors.

We now turn to the query-complexity lower bound. Classical lower-bound analyses often rely on change-of-measure arguments from sequential hypothesis testing. These arguments do not directly extend to the quantum setting, so we develop a quantum counterpart.

\begin{theorem}[Informal]
\label{thm:informal-structural-quantum-lower}
Consider a problem instance $\mu$ for which the root decision depends on a
collection of decision-relevant leaves. For any $\delta\in(0,1/4]$, every
adaptive quantum algorithm that identifies an $\epsilon$-optimal root move
with probability at least $1-\delta$ requires
\[
    \mathbb E_\mu[\tau]
    =
    \Omega\left(
        \log\frac{1}{\delta}
        \sum_{\ell\in\mathcal L_{\mathrm{piv}}}
        \frac{1}{d_{\ell,\epsilon}}
    \right),
\]
oracle queries. Here, $\mathcal L_{\mathrm{piv}}$ denotes the set of pivotal leaves; see Definition~\ref{def:pivotal-block} in Appendix \ref{sec: lower bound}.
\end{theorem}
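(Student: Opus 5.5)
We prove the bound by reducing $\epsilon$-optimal move identification to a collection of binary quantum hypothesis tests, one per pivotal leaf. We then combine the per-leaf costs additively through the expected number of queries spent on each oracle.

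\textbf{Step 1: alternative instances.} Fix $\ell\in\mathcal L_{\mathrm{piv}}$. We build an alternative instance $\mu^{(\ell)}$ that differs from $\mu$ only at coordinate $\ell$, with $|\mu^{(\ell)}_\ell-\mu_\ell| = c\, d_{\ell,\epsilon}$ for a suitable constant $c$ (for instance $c=3$, keeping the value inside $(0,1)$). The perturbation is chosen so that no root move is $\epsilon$-optimal under both $\mu$ and $\mu^{(\ell)}$. Pivotality (Definition~\ref{def:pivotal-block}) is exactly what should guarantee this. Moving $\mu_\ell$ by more than the largest local gap $\Delta_\ell$ on its path makes $\ell$ the critical leaf of every ancestor. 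The root value then moves one-for-one with $\mu_\ell$, so a shift exceeding $\Delta_\star+\epsilon$ changes which moves are $\epsilon$-optimal. This step uses the recursion \eqref{eq:value-recursion} and a short induction along $\mathrm{path}(\ell)$.

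\textbf{Step 2: a sequential quantum phase-testing lemma.} Write $U_\ell|0\rangle = \cos\theta|0\rangle+\sin\theta|1\rangle$ with $\sin^2\theta=\mu_\ell$. Then $|\theta-\theta'| \ge c' d_{\ell,\epsilon}/\sqrt{\mu(1-\mu)}$, which is at least of order $d_{\ell,\epsilon}$. The key tool we need is the following statement. Any adaptive algorithm, with a random stopping time, that accesses $U_\theta$ and $U_\theta^\dagger$ inside arbitrary circuits and distinguishes $\theta$ from $\theta'$ with error at most $\delta$ under both hypotheses must satisfy
\[
\mathbb E_\theta[N_\theta]\ \ge\ \frac{c''}{|\theta-\theta'|}\log\frac1\delta ,
\]
where $N_\theta$ is the number of calls to that oracle.

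The proof route is a hybrid argument. Each query changes the joint state by at most $\|U_\theta-U_{\theta'}\|\le|\theta-\theta'|$. After $T$ queries the final states therefore have fidelity at least $\cos(T|\theta-\theta'|)$. This alone gives only $\Omega(1/|\theta-\theta'|)$ with no $\log(1/\delta)$ factor.

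To recover the $\log(1/\delta)$ factor, we will use the following argument.
\begin{enumerate}
\item Run the algorithm and stop it after $T=\frac{k}{|\theta-\theta'|}$ queries.
\item Use the Fuchs--van de Graaf bounds, or Helstrom error exponents, on the reduced one-qubit algebra generated by the oracle. The natural choice is a ``pretty good'' amplification-versus-repetition argument: the algorithm's output can be simulated from $O(T|\theta-\theta'|)$ independent copies of a constant-advantage test.
\item The error probability then decays at most like $e^{-O(T|\theta-\theta'|)}$. Requiring this to be at most $\delta$ forces $T\gtrsim\log(1/\delta)/|\theta-\theta'|$.
\end{enumerate}
Concretely, we will bound the Bhattacharyya-type quantity $-\log F(\rho_\theta,\rho_{\theta'})$ of the final states by $O(\mathbb E[N]\,|\theta-\theta'|)$ plus a constant. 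We then compare this with the requirement $-\log F\ge \tfrac12\log\frac{1}{4\delta}$, which holds for $\delta\le 1/4$. The quantum counterpart of the classical chain rule for KL divergence is the part we expect to be the main obstacle. Fidelity is not additive under adaptive interleaving, and a naive argument loses the log factor. We will first try a multiplicative fidelity argument applied block by block: split the queries into blocks of length $1/|\theta-\theta'|$, and show that each block reduces fidelity by at most a constant factor. Stopping times will be handled through Markov's inequality on $N$.

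\textbf{Step 3: summation.} Apply Step 2 to each $\ell\in\mathcal L_{\mathrm{piv}}$, using the pair $(\mu,\mu^{(\ell)})$. The other oracles are identical under the two instances, so the algorithm's remaining queries can be absorbed into the adaptive circuit. This gives $\mathbb E_\mu[N_\ell]\ge c\log(1/\delta)/d_{\ell,\epsilon}$. Summing over pivotal leaves and using $\tau=\sum_\ell N_\ell$ yields the claimed bound.
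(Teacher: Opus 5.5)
The key lemma of your Step 2 is false, so this route cannot produce the $\log(1/\delta)$ factor. For a test between two \emph{fixed} unitaries, the quantum error does not decay like $e^{-O(T|\theta-\theta'|)}$, because distinct unitaries can be discriminated with zero error using finitely many queries. For your rotation oracles, $U_\theta^\dagger U_{\theta'}$ has eigenvalues $e^{\pm i\phi}$ with $\phi=\theta'-\theta$. Roughly $\pi/(2|\phi|)$ coherent queries (for instance with an entangled probe) make the two final states exactly orthogonal. An algorithm tailored to the pair $(\mu,\mu^{(\ell)})$ therefore has error $0\le\delta$ for every $\delta$, using only $O(1/|\theta-\theta'|)$ queries. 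Your own hybrid estimate $\cos(T|\theta-\theta'|)$ already allows this: the fidelity can drop to exactly $0$ after $O(1)$ blocks of length $1/|\theta-\theta'|$. So the claim that ``each block reduces the fidelity by at most a constant factor'' fails, and so does $-\log F\le O(\mathbb E[N]\,|\theta-\theta'|)+O(1)$. The simulation by independent constant-advantage tests also fails, since it would contradict zero-error discrimination. Any argument that uses correctness on only the two instances $\mu$ and $\mu^{(\ell)}$ can give at most order $1/|\theta-\theta'|$ queries per leaf, with no logarithm.

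The missing idea is that the logarithm comes from a \emph{composite} alternative. An $(\epsilon,\delta)$-correct tree algorithm must be correct on a whole continuum of perturbed instances: every phase outside a short indifference region $\Theta_j$ with $\operatorname{Leb}(\Theta_j)=s\le Cd_{\mathcal B_j}$. The paper exploits this with the polynomial method. The truncated acceptance probability $a_n(\theta)=\mathbb P_\theta(D=0,\,N\le n)$ is a trigonometric polynomial of degree at most $2n$ and is bounded by $\zeta$ off $\Theta_j$. A trigonometric Remez inequality then gives $a_n(\theta_{\mathrm c})\le\zeta e^{8ns}$. Taking $n\approx\frac{1}{8s}\log\frac{1}{4\zeta}$ forces $\mathbb P_{\theta_{\mathrm c}}(N>n)\ge 1/2$, and $\mathbb E[N]\ge n\,\mathbb P(N>n)$ handles the stopping time, as you anticipated. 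Two smaller problems remain. First, Step 1 fails in general, because a single-leaf shift can be masked by alternating MAX and MIN nodes. Take a MAX root whose children are a leaf of mean $0.6$ and a MIN node with leaves of means $0.8$ and $0.5$, and let $\epsilon<0.1$. No change to the $0.8$ leaf alters the root decision, even though its $\Delta_\ell=0.3$. This is why Definition~\ref{def:pivotal-block} works with blocks and takes the perturbation families and the bound on $\operatorname{Leb}(\Theta_j)$ as hypotheses rather than deriving them from $\Delta_\ell$. Second, your inequality $|\theta-\theta'|\gtrsim d_{\ell,\epsilon}$ points the wrong way. A bound of the form $c/|\theta-\theta'|$ needs $|\theta-\theta'|=O(d_{\ell,\epsilon})$, which fails for $\mu_\ell$ near $0$ or $1$, where the angle shift is of order $\sqrt{d_{\ell,\epsilon}}$.
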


The formal version of Theorem \ref{thm:informal-structural-quantum-lower} is stated in Theorem \ref{thm:structural-quantum-lower} in Appendix \ref{sec: lower bound}, where we specify the structural conditions required for the lower-bound analysis. The lower bound depends only on pivotal leaves because, in a tree-search problem, not every leaf can affect the root decision. Hence, the query complexity is determined by the parts of the tree that matter for distinguishing the best move. The dependence on the inverse difficulty $d_{\ell,\epsilon}$ is linear, which shows that the quadratic speedup for each pivotal leaf is optimal up to logarithmic factors.

Corollary \ref{cor:matching-quantum-bounds} in the Appendix shows that, for uniformly pivotal instances with $\mathcal L_{\mathrm{piv}}=\mathcal L$, the upper and lower bounds match up to logarithmic factors. Thus, QMCTS is near-optimal in this case. This also highlights a structural difference between MCTS and BAI. In BAI, every arm can directly affect the final decision, which corresponds to the uniformly pivotal case. For a general MCTS problem instance, however, which leaves are pivotal is not known in advance. As a result, the algorithm may need to query leaves that eventually turn out to be nonpivotal. This extra cost comes from identifying the decision-relevant parts of the tree and makes the remaining gap between the upper and lower bounds difficult to close.

\textbf{Technical Novelty.} We conclude this section by highlighting the main technical ingredients in the query-complexity analysis. For the upper bound, we use induction on the tree to establish the validity of the elimination step and combine this argument with the QMC error guarantee to obtain the query-complexity bound. For the lower bound, we exploit the tree structure and introduce pivotal leaf blocks to characterize how individual leaves affect the root decision. {\color{black}We also develop a new sequential quantum phase-testing result (see Appendix \ref{sec: phase-testing} for details) and reduce the MCTS identification problem to a collection of quantum phase-testing problems.} This result is then used to derive the query-complexity lower bound and may be of independent interest for other quantum lower-bound analyses.

\section{Numerical Experiment}
In this section, we conduct numerical experiments to evaluate QMCTS. We use synthetic experiments to verify the theoretical results, a large-scale chess experiment to access the scalability of the algorithm on a practical problem, and a real quantum computer experiment to examine the feasibility of implementing QMCTS on a quantum processor.

\subsection{Synthetic Experiment}

We first examine how the query complexity scales with the inverse root gap $1/\Delta_\star$ and the inverse precision $1/\epsilon$. We consider a fixed depth-two MAX-MIN tree in which the MAX root has eight actions, each leading to a MIN node with two leaves, for a total of 25 nodes and 16 Bernoulli leaves. We set $\delta=0.05$ and run 100 independent replications for each setting. As a classical benchmark, we include UGapE-MCTS \citep{kaufmann2017monte}. In our implementation, QMCTS is simulated using standard quantum amplitude estimation as the QMC subroutine~\citep{brassard2000quantum}.

Figure~\ref{fig:complexity scaling} reports the mean query complexity as the problem difficulty increases. In all tested settings, all three algorithms terminate and return an $\epsilon$-optimal root action with a 100\% empirical success rate. Panel~(a) varies the inverse root gap with $\epsilon=2^{-15}$ fixed, while panel~(b) varies the inverse precision with $\Delta_\star=2^{-15}$ fixed. The fitted log-log slopes are $0.96$ and $0.95$ for QMCTS, $2.02$ and $2.01$ for CMCTS, and $1.95$ and $2.01$ for UGapE-MCTS, respectively. Thus, QMCTS exhibits approximately linear scaling with the inverse gap and inverse precision, while both classical methods exhibit approximately quadratic scaling. 
These results are consistent with the theoretical analysis and support the quantum advantage over the classical benchmarks. 

\begin{figure}[!t]
    \centering
    \includegraphics[width=1.0\linewidth]{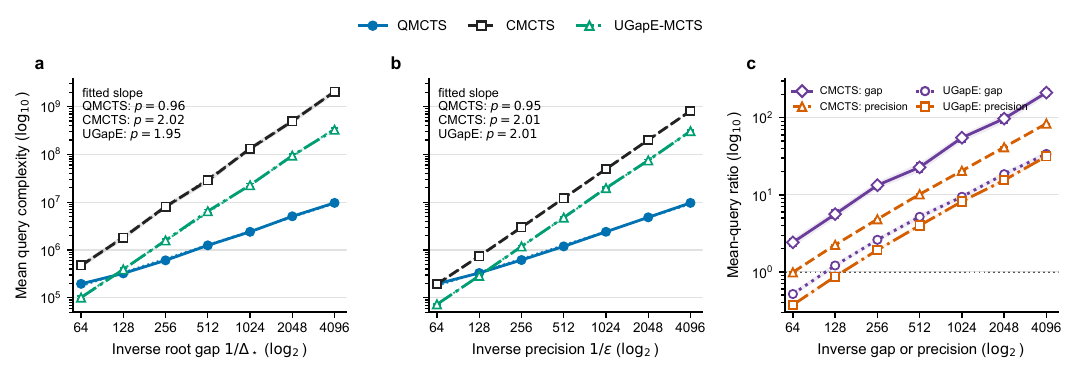}
    \caption{Query Complexity Scaling with Gap and Precision}
    \label{fig:complexity scaling}
\end{figure}

\subsection{Large-Scale Chess Experiment}

We next evaluate the algorithms on a large search tree constructed from the Lichess database~\citep{lichessOpenDatabase}. Historical games that follow the same move sequence from the selected root position are merged into a common path in the tree. Each leaf represents games that reach the same terminal position in the constructed tree. From the root player's perspective, wins, draws, and losses are assigned scores of $1$, $1/2$, and $0$, respectively, and the leaf value is defined as the empirical mean score of the associated games. A leaf-oracle query returns an independent Bernoulli reward with this mean. The experiment can be viewed as an offline policy-improvement problem in which historical outcomes define the stochastic leaf rewards.

We first compare QMCTS and CMCTS on a depth-$11$ tree with $9509$ nodes, $3072$ leaves, and 10 root moves. The root performance gap is $\Delta_\star=0.0020503$. We set $\delta=0.05$, vary $\epsilon$, and run $100$ independent replications for each setting.

On this large tree, pure QMCTS is less query-efficient when the inverse precision is small. In the early iterations, thousands of leaves remain active, so QMCTS repeatedly applies QMC to many leaves. In contrast, CMCTS requires only a small number of additional classical samples and reuses all previous observations. As a result, QMCTS incurs a large query cost before eliminating enough leaves. 

\begin{figure}[!t]
    \centering
    \includegraphics[width=1.0\linewidth]{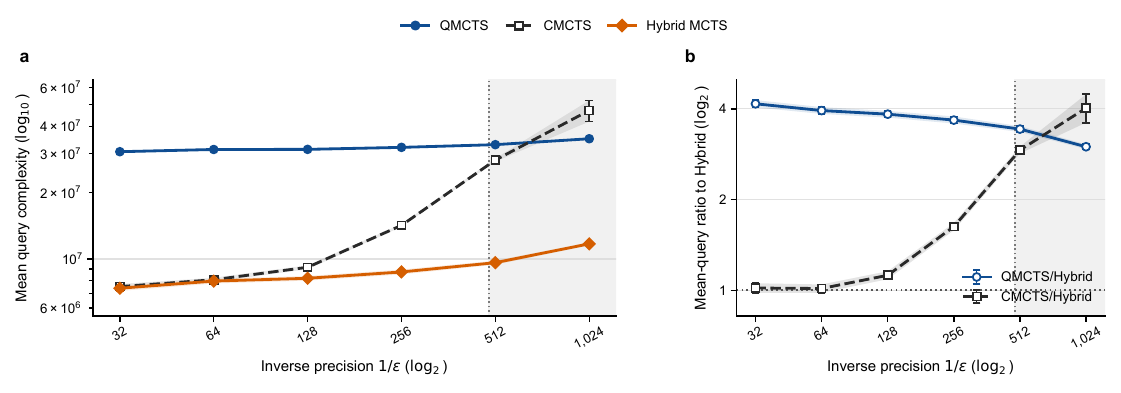}
    \caption{Large-scale chess query complexity.}
    \label{fig:large-scale chess}
\end{figure}

Figure~\ref{fig:large-scale chess} reports the mean query complexity in the large-scale chess experiment. All methods terminate and return an $\epsilon$-optimal root action with a $100\%$ empirical success rate. Panel~(a) shows that, at small inverse precision, CMCTS and Hybrid MCTS require substantially fewer queries than QMCTS. At $1/\epsilon=32$, the mean query complexities are $30.64$ million for QMCTS, $7.50$ million for CMCTS, and $7.37$ million for Hybrid MCTS. As the inverse precision increases, the query cost of CMCTS grows rapidly, while QMCTS benefits from its better scaling. At $1/\epsilon=1024$, the corresponding query complexities are $35.07$, $47.11$, and $11.71$ million. Panel~(b) reports the ratios relative to Hybrid MCTS. At $1/\epsilon=1024$, the QMCTS-to-Hybrid and CMCTS-to-Hybrid ratios are $2.994$ and $4.022$, respectively. These results show that Hybrid MCTS reduces the overall query cost by using classical sampling when the inverse precision is small and switching to quantum estimation as the inverse precision increases.

\subsection{Real Experiment}

We further evaluate QMCTS on an IBM Quantum processor using Qiskit~\citep{javadi2024quantum}. We consider a depth-two MAX--MIN tree with two root moves and four Bernoulli leaves, with $\Delta_\star=0.01640625$, $\epsilon=2^{-6}$, and $\delta=0.05$.

We run QMCTS and CMCTS independently five times. Both algorithms terminate and return the unique $\epsilon$-optimal root move in all replications. The mean query complexities are $68352$ for QMCTS and $306444$ for CMCTS. Thus, CMCTS uses about $4.48$ times more oracle queries, corresponding to a $77.70\%$ reduction for QMCTS. These results provide a small-scale hardware demonstration that QMCTS maintains correctness and a query-complexity advantage on a physical quantum processor.

\section{Conclusion and Limitations}
In this paper, we study how quantum computing can improve MCTS. We develop a quantum algorithm, QMCTS, and establish its correctness and query-complexity guarantees. Numerical experiments show a clear query advantage over classical benchmarks. The main limitations are twofold. First, the gap between the upper and lower query-complexity bounds remains open for general tree instances, and we conjecture that closing this gap may be difficult because of the tree structure of the problem.  Second, query complexity does not fully capture the gate complexity and implementation cost on real quantum hardware.
Future work may extend the framework to more general tree-search models, including broader leaf-value distributions, stochastic transitions, and additional sources of uncertainty, and develop hardware-level complexity analyses that account for circuit depth and gate complexity.

\bibliographystyle{informs2014} 
\bibliography{sample} 
\ECSwitch



%
%
%

\section{Query Complexity Analysis}
\label{app: query complexity}

\textit{Proof of Lemma \ref{lem: good event}.}
Let $\mathcal F_{r-1}$ be the sigma-field generated by all operations and measurement
outcomes before iteration $r$. Condition on $\mathcal F_{r-1}$, according to the Lemma \ref{lem: QMC}, the estimators satisfy
\[
\mathbb P(F_{\ell, r}|\mathcal F_{r-1}) \le \mathbb I(\ell \in \mathcal L_{r-1})\eta_r \le \eta_r.
\]
Therefore, a union bound gives
\begin{equation*}
\begin{aligned}
    \mathbb P(\mathcal G^c) \le \sum_{r\ge 1}\sum_{\ell \in\mathcal L} \eta_r = \frac{\delta}{2}\sum_{r\ge 1}\sum_{\ell \in\mathcal L} \frac{1}{Lr^2} = \frac{\pi^2}{12}\delta < \delta.
\end{aligned}
\end{equation*}

For an active node before iteration $r$, let $V^{r-1}_s$ be its true value computed on the active subtree $\mathcal T_{r-1}$.

\begin{lemma}
\label{lem:value-estimation}
On the good event $\mathcal G$, for each iteration $r$ and every node
$s$ active before iteration $r$,
\[
    \left|\widehat{V}_{s,r}-V_s^{(r-1)}\right|
    \le \alpha_r .
\]
\end{lemma}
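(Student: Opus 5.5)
We will prove the bound by structural induction on the active subtree $\mathcal T_{r-1}$, processing nodes from the leaves upward, i.e., in reverse topological order, which is exactly the order in which the propagation rule (\ref{eq:value-propagation}) computes $\widehat V_{s,r}$. Throughout, fix an iteration $r$ and work on the good event $\mathcal G$. Here $V_s^{(r-1)}$ denotes the value obtained by applying the recursion (\ref{eq:value-recursion}) to the true means $\mu$, but with children restricted to the active sets $\mathcal C_{r-1}(s)$. Thus the true and estimated values obey the same max/min recursion on the same tree $\mathcal T_{r-1}$, and differ only in their leaf inputs.

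\emph{Base case.} Let $s=\ell\in\mathcal L_{r-1}$ be an active leaf. Then $\widehat V_{\ell,r}=\widehat\mu_{\ell,r}$ and $V_\ell^{(r-1)}=\mu_\ell$. Since $\ell$ is active, the event $F_{\ell,r}$ is nontrivial, and on $\mathcal G$ its complement holds, giving $|\widehat\mu_{\ell,r}-\mu_\ell|\le\alpha_r$. A small point needs checking here: every active internal node has at least one active child, so that the recursion is well defined and every path down $\mathcal T_{r-1}$ ends at a leaf of $\mathcal L_{r-1}$. This holds because $\mathcal T_{r-1}$ is a connected subtree formed by removing children together with all their descendants. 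Moreover, a MAX node always keeps its empirical maximizer, since its gap is $0\le\gamma_{r-1}$, and symmetrically a MIN node keeps its empirical minimizer. For $r=1$ the tree is the full tree.

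\emph{Inductive step.} Let $s$ be an active internal node, and assume $|\widehat V_{c,r}-V_c^{(r-1)}|\le\alpha_r$ for all $c\in\mathcal C_{r-1}(s)$. We use the elementary $1$-Lipschitz property of max and min in the sup-norm:
\[
\Bigl|\max_{c} a_c-\max_{c} b_c\Bigr|\le\max_{c}|a_c-b_c|,
\]
and likewise for $\min$. To see this, let $c^\ast$ maximize $a$; then $\max a-\max b\le a_{c^\ast}-b_{c^\ast}$, and the reverse inequality follows symmetrically. Applying this with $a_c=\widehat V_{c,r}$ and $b_c=V_c^{(r-1)}$ over the common index set $\mathcal C_{r-1}(s)$ gives $|\widehat V_{s,r}-V_s^{(r-1)}|\le\alpha_r$, whether $s$ is a MAX or a MIN node. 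The induction then reaches the root.

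The main obstacle is bookkeeping rather than analysis. We must make sure the estimated and true recursions use exactly the same active child sets $\mathcal C_{r-1}(s)$, and that the bound refers to $V^{(r-1)}$ rather than the full-tree $V_s$. Relating $V_s^{(r-1)}$ to $V_s$ requires showing that elimination preserves an optimal child, which is the separate auxiliary result and is not needed here. We also need the well-definedness point above, namely that each active internal node keeps a nonempty active child set.
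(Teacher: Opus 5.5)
Your proposal is correct and follows essentially the same route as the paper's proof. Both argue by induction on height in $\mathcal T_{r-1}$, using the good event at the active leaves and the sandwich (equivalently, $1$-Lipschitz) property of $\max$/$\min$ over the common child set $\mathcal C_{r-1}(s)$. Your extra check, that every active internal node keeps its empirical maximizer or minimizer and hence a nonempty active child set, is a valid detail that the paper leaves implicit.
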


Fix an iteration $r$. We prove the claim by induction on the height of
$s$ in the active tree $\mathcal T_{r-1}$, where the height of a node
is the maximum number of edges from that node to an active descendant
leaf.

For an active leaf $\ell\in\mathcal L_{r-1}$, we have
$
\widehat V_{\ell,r}=\widehat\mu_{\ell,r},
$ and $
V_{\ell}^{(r-1)}=\mu_\ell.
$
By the definition of the good event $\mathcal G$,
\[
\left|\widehat V_{\ell,r}-V_{\ell}^{(r-1)}\right|
\le \alpha_r.
\]

Now consider an active internal node $s$. Assume that the claim holds
for every active child
$c\in\mathcal C_{r-1}(s)$. 

Suppose first that $s$ is a MAX node. By the recursive definitions of
the empirical and true values,
\[
\widehat V_{s,r}
=
\max_{c\in\mathcal C_{r-1}(s)}
\widehat V_{c,r},
\qquad
V_s^{(r-1)}
=
\max_{c\in\mathcal C_{r-1}(s)}
V_c^{(r-1)}.
\]
The induction hypothesis implies that, for every active child $c$,
\[
V_c^{(r-1)}-\alpha_r
\le
\widehat V_{c,r}
\le
V_c^{(r-1)}+\alpha_r.
\]
Taking the maximum over
$c\in\mathcal C_{r-1}(s)$ yields
\[
V_s^{(r-1)}-\alpha_r
\le
\widehat V_{s,r}
\le
V_s^{(r-1)}+\alpha_r.
\]
Therefore,
\[
\left|\widehat V_{s,r}-V_s^{(r-1)}\right|
\le \alpha_r.
\]

If $s$ is a MIN node, then
\[
\widehat V_{s,r}
=
\min_{c\in\mathcal C_{r-1}(s)}
\widehat V_{c,r},
\qquad
V_s^{(r-1)}
=
\min_{c\in\mathcal C_{r-1}(s)}
V_c^{(r-1)}.
\]
Applying the same inequalities and taking the minimum gives
\[
V_s^{(r-1)}-\alpha_r
\le
\widehat V_{s,r}
\le
V_s^{(r-1)}+\alpha_r,
\]
and thus
\[
\left|\widehat V_{s,r}-V_s^{(r-1)}\right|
\le \alpha_r.
\]

By induction, the claim holds for every node active before iteration
$r$.

\begin{lemma}
\label{lem:safe-threshold}
On the good event $\mathcal G$, every deleted child is strictly
suboptimal for its parent in the active true tree. Moreover, at least
one optimal child of every active internal node is retained.
Consequently, after every iteration,
$
V_s^{(r)}=V_s^{(r-1)}=V_s
$
for every retained node $s$.
\end{lemma}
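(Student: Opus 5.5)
The plan is a double induction: an outer induction on the iteration index $r$, and inside each iteration a bottom-up induction on the height of a node in the active tree. The outer hypothesis is that $V^{(r-1)}_s = V_s$ for every node active before iteration $r$. It holds trivially for $r=1$, since $\mathcal T_0=\mathcal T$.

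\textbf{Step 1: only strictly suboptimal children are deleted.} Fix $r$ and work on $\mathcal G$. The empirical gaps $\widehat g_r(s,c)$ are all computed from the estimates $\widehat V_{\cdot,r}$, which are propagated on $\mathcal T_{r-1}$ before any deletion. Hence processing nodes in reverse topological order does not change the quantity tested at each edge; it only determines which deleted subtrees contain which others. Lemma~\ref{lem:value-estimation} therefore applies to every pair $(s,c)$ with $s$ and $c$ active in $\mathcal T_{r-1}$.

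Take $s$ a MAX node and suppose $c$ is deleted, so $\widehat g_r(s,c)>\gamma_r=2\alpha_r$. Then
\[
V^{(r-1)}_s-V^{(r-1)}_c \;\ge\; \widehat V_{s,r}-\widehat V_{c,r}-2\alpha_r \;>\; 0 ,
\]
so $c$ is strictly suboptimal for $s$ in the active true tree. The MIN case is symmetric, with the signs reversed.

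\textbf{Step 2: an optimal child is always retained.} Let $c^\star\in\mathcal C_{r-1}(s)$ attain $V^{(r-1)}_s$. For a MAX node,
\[
\widehat g_r(s,c^\star)=\widehat V_{s,r}-\widehat V_{c^\star,r}\le \bigl(V^{(r-1)}_s+\alpha_r\bigr)-\bigl(V^{(r-1)}_{c^\star}-\alpha_r\bigr)=2\alpha_r=\gamma_r ,
\]
so the strict threshold test fails and $c^\star$ is not removed; the MIN case is the same. A single test pass removes descendants only together with a removed ancestor. Consequently, if $s$ itself survives, its edge to $c^\star$ survives.

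\textbf{Step 3: values are preserved.} Show $V^{(r)}_s=V^{(r-1)}_s$ for every retained node $s$, by induction on height in $\mathcal T_r$. For a retained leaf this is immediate. For a retained internal MAX node $s$:
\begin{itemize}
\item every retained child $c$ satisfies $V^{(r)}_c=V^{(r-1)}_c\le V^{(r-1)}_s$ by the inner induction hypothesis;
\item the retained optimal child $c^\star$ from Step 2 attains $V^{(r-1)}_s$.
\end{itemize}
Hence the maximum over $\mathcal C_r(s)$ equals $V^{(r-1)}_s$, and MIN nodes are handled symmetrically. Combining this with the outer hypothesis gives $V^{(r)}_s=V_s$, which closes the induction over iterations. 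Since $V^{(r-1)}=V$ on retained nodes, the phrase ``strictly suboptimal in the active true tree'' in Step 1 also holds with respect to the original values.

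\textbf{Main obstacle.} The delicate point is the interaction between the deletion order and the estimates. One might worry that a deletion deep in the tree changes a parent's value, so that a later parent-level test compares against stale quantities. The resolution I would stress is that all tests in iteration $r$ use estimates on the fixed tree $\mathcal T_{r-1}$, where Lemma~\ref{lem:value-estimation} holds uniformly. Step 3 then certifies afterwards that the deletions did not alter any retained node's true value. A second point to state explicitly is that the inequality in Step 2 is non-strict while the deletion rule is strict; this asymmetry is exactly what guarantees retention of an optimal child under ties.
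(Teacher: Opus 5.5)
Your proposal is correct and follows the paper's proof essentially step for step: the same outer induction on $r$, and the same two inequalities from Lemma~\ref{lem:value-estimation} with $2\alpha_r=\gamma_r$, one ruling out deletion of an optimal child and one showing deleted children are strictly suboptimal. Your Step 3 spells out as an explicit height induction on $\mathcal T_r$ what the paper compresses into ``applying this argument from the deepest internal nodes toward the root.'' Your remark that every test in iteration $r$ uses estimates fixed on $\mathcal T_{r-1}$ makes explicit an assumption the paper relies on implicitly.
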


We proceed by induction over the iterations. Before the first
iteration, the active tree is the original tree, so the result holds.

Fix iteration $r$ and suppose that
$
V_s^{(r-1)}=V_s
$
for every node active before this iteration.

Consider first a MAX node $s$. If a child $c$ is deleted, then
$
\widehat V_{s,r}-\widehat V_{c,r}>\gamma_r.
$
By Lemma~\ref{lem:value-estimation} and
$2\alpha_r=\gamma_r$,
\[
\begin{aligned}
V_s^{(r-1)}-V_c^{(r-1)}
&\ge
\widehat V_{s,r}-\widehat V_{c,r}-2\alpha_r>
\gamma_r-2\alpha_r
=0.
\end{aligned}
\]
Hence, $c$ is strictly suboptimal for $s$.

Let $c^\star$ be an optimal child of $s$, so that
$
V_{c^\star}^{(r-1)}=V_s^{(r-1)}.
$
Then
\[
\widehat V_{s,r}-\widehat V_{c^\star,r}
\le
2\alpha_r
=
\gamma_r,
\]
so $c^\star$ is not deleted.

The argument for a MIN node is identical, with the inequalities
reversed. Therefore, every active internal node retains at least one
child attaining its true MAX or MIN value.

Applying this argument from the deepest internal nodes toward the
root, we have
$
V_s^{(r)}=V_s^{(r-1)}=V_s.
$
The conclusion follows by induction over $r$.

\subsection{Proof of Theorem \ref{thm: corectness}}

Because $\gamma_r=2^{-r}$, we have $\gamma_r\to 0$ as
$r\to\infty$. Since $\epsilon>0$, there exists a finite iteration
\[
R_\epsilon
:=
\min\left\{
r\ge 1:\gamma_r\le\epsilon
\right\}.
\]
The stopping condition of Algorithm~\ref{alg:qte-mcts} is
satisfied no later than iteration $R_\epsilon$. Hence, the algorithm
terminates after finitely many iterations.

Condition on the good event $\mathcal G$. By
Lemma~\ref{lem:safe-threshold}, threshold elimination preserves the
true value of every retained node. Moreover, at least one optimal root
move remains active after every iteration. Consequently, for every
iteration $r$, there exists an active root move
$
s_r^\star\in\mathcal C_r(s_0)
$
such that
$
V_{s_r^\star}=V_{s_0}.
$

Consider the two possible stopping
conditions.
First, suppose that
$
\left|\mathcal C_\tau(s_0)\right|=1.
$
Since at least one optimal root move remains active, the unique
remaining move must be optimal. Therefore,
\[
V_{s_0}-V_{\widehat{s}}=0\le\epsilon.
\]

Now suppose that the algorithm stops because
$
\gamma_\tau\le\epsilon.
$
The returned move satisfies
\[
\widehat{s}
\in
\operatorname*{arg\,max}_{s\in\mathcal C_\tau(s_0)}
\widehat V_{s,\tau}.
\]
Let $s_\tau^\star$ be an optimal active root move. Then
$
\widehat V_{\widehat{s},\tau}
\ge
\widehat V_{s_\tau^\star,\tau}.
$
By Lemma~\ref{lem:value-estimation}, and because the true values of
retained nodes are preserved by Lemma~\ref{lem:safe-threshold},
\[
\left|
\widehat V_{s,\tau}-V_s
\right|
\le\alpha_\tau
\]
for every root move $s$ active at iteration $\tau$. Hence,
\begin{align*}
V_{s_0}-V_{\widehat{s}}
&=
V_{s_\tau^\star}-V_{\widehat{s}}\\
&=
\left(
V_{s_\tau^\star}
-
\widehat V_{s_\tau^\star,\tau}
\right)
+
\left(
\widehat V_{s_\tau^\star,\tau}
-
\widehat V_{\widehat{s},\tau}
\right)
+
\left(
\widehat V_{\widehat{s},\tau}
-
V_{\widehat{s}}
\right)\\
&\le
\alpha_\tau+0+\alpha_\tau\\
&=
2\alpha_\tau\\
&\le
\gamma_\tau\\
&\le
\epsilon.
\end{align*}

Thus, on the event $\mathcal G$, the algorithm returns an
$\epsilon$-optimal root move. Since
$
\mathbb P(\mathcal G)\ge1-\delta,
$
we conclude that
\[
\mathbb P\!\left(
V_{s_0}-V_{\widehat{s}}\le\epsilon
\right)
\ge
\mathbb P(\mathcal G)
\ge
1-\delta.
\]

\subsection{Proof of Theorem \ref{thm: query-complexity}}

We prove the query bound on the good event $\mathcal G$.

For each leaf $\ell\in\mathcal L$, let $R_\ell$ denote the last
iteration in which $\ell$ is queried. 
Suppose first that $R_\ell\ge 2$. Since $\ell$ is queried in iteration
$R_\ell$, it must remain active after iteration $R_\ell-1$ and
iteration $R_\ell-1$ is nonterminal.

Consider any edge $(s,c)$ on the active path from the root to
$\ell$. Since this edge survives iteration $R_\ell-1$, its empirical
gap satisfies
$
\widehat g_{R_\ell-1}(s,c)
\le
\gamma_{R_\ell-1}.
$
By Lemma \ref{lem:value-estimation} and Lemma \ref{lem:safe-threshold}
\[
\left|
\widehat V_{u,R_\ell-1}-V_u
\right|
\le
\alpha_{R_\ell-1}
\]
for every active node $u$. Therefore,
\[
g(s,c)
\le
\widehat g_{R_\ell-1}(s,c)
+
2\alpha_{R_\ell-1}.
\]
Since $2\alpha_r=\gamma_r$, it follows that
$
g(s,c)
\le
2\gamma_{R_\ell-1}.
$
Taking the maximum over all edges on the path to $\ell$ gives
$
\Delta_\ell
\le
2\gamma_{R_\ell-1}.
$

Because iteration $R_\ell-1$ is nonterminal, we have
$
\epsilon<\gamma_{R_\ell-1}
\le
2\gamma_{R_\ell-1}.
$

Furthermore, at least two root moves remain after iteration
$R_\ell-1$. By Lemma \ref{lem:safe-threshold}, at least one
optimal root move remains active. Let $s^\star$ be such an optimal
move and let $s\neq s^\star$ be another active root move. Since the
edge $(s_0,s)$ has not been eliminated,
\[
V_{s^\star}-V_s
\le
2\gamma_{R_\ell-1}.
\]
By the definition of the root gap,
\[
\Delta_\star
\le
V_{s^\star}-V_s
\le
2\gamma_{R_\ell-1}.
\]

Combining these inequalities yields
\[
d_{\ell,\epsilon}
=
\Delta_\ell\vee\Delta_\star\vee\epsilon
\le
2\gamma_{R_\ell-1}.
\]
Since $\gamma_r=2^{-r}$,
$
\gamma_{R_\ell-1}
=
2\gamma_{R_\ell},
$
and thus
\begin{equation}
d_{\ell,\epsilon}
\le
4\gamma_{R_\ell}.
\label{eq:last-query-resolution}
\end{equation}
Equivalently,
\begin{equation}
\frac{1}{\gamma_{R_\ell}}
\le
\frac{4}{d_{\ell,\epsilon}}.
\label{eq:last-query-threshold-bound}
\end{equation}

If $R_\ell=1$, then
\[
\frac{1}{\gamma_{R_\ell}}
=
2
\le
\frac{4}{d_{\ell,\epsilon}},
\]
because $d_{\ell,\epsilon}\le 1$. Hence,
\eqref{eq:last-query-threshold-bound} holds for every leaf.

Equation~\eqref{eq:last-query-resolution} also implies
$
d_{\ell,\epsilon}
\le
4\cdot 2^{-R_\ell},
$
so
\begin{equation}
R_\ell
\le
2+\log_2\frac{1}{d_{\ell,\epsilon}}.
\label{eq:last-query-round-bound}
\end{equation}

We now bound the queries allocated to leaf $\ell$. In iteration $r$, according to Lemma \ref{lem: QMC},
QMC uses at most
$
\frac{C_{\mathrm Q}}{\alpha_r}
\log\frac{1}{\eta_r}
$
queries to $U_\ell$ and $U_\ell^\dagger$, where $C_Q>0$ is constant. Since
$
\alpha_r={\gamma_r}/{2},
\eta_r={\delta}/{2Lr^2},
$
the total query complexity associated with $\ell$ satisfies
\begin{align*}
Q_\ell
&\le
C_{\mathrm Q}
\sum_{r=1}^{R_\ell}
\frac{1}{\alpha_r}
\log\frac{1}{\eta_r}=
2C_{\mathrm Q}
\sum_{r=1}^{R_\ell}
\frac{1}{\gamma_r}
\left[
\log\frac{2L}{\delta}
+
2\log r
\right].
\end{align*}
Then, we have
\begin{align*}
Q_\ell
&\le
2C_{\mathrm Q}
\left[
\log\frac{2L}{\delta}
+
2\log R_\ell
\right]
\sum_{r=1}^{R_\ell}\frac{1}{\gamma_r}.
\end{align*}
Since $\gamma_r=2^{-r}$,
\[
\sum_{r=1}^{R_\ell}\frac{1}{\gamma_r}
=
\sum_{r=1}^{R_\ell}2^r
<
2^{R_\ell+1}
=
\frac{2}{\gamma_{R_\ell}}.
\]
Consequently,
\[
Q_\ell
<
\frac{4C_{\mathrm Q}}{\gamma_{R_\ell}}
\left[
\log\frac{2L}{\delta}
+
2\log R_\ell
\right].
\]

Applying
\eqref{eq:last-query-threshold-bound} and
\eqref{eq:last-query-round-bound} gives
\[
Q_\ell
\le
\frac{16C_{\mathrm Q}}{d_{\ell,\epsilon}}
\left[
\log\frac{2L}{\delta}
+
2\log\left(
2+\log_2\frac{1}{d_{\ell,\epsilon}}
\right)
\right].
\]

Finally, every oracle query is associated with exactly one active
leaf. Summing the preceding inequality over
$\ell\in\mathcal L$ proves
\eqref{eq:query-complexity-upper-bound}.

\section{Query Complexity Lower Bound}
\label{sec: lower bound}

\subsection{Sequential Quantum Phase Hypothesis Testing}
\label{sec: phase-testing}
In this subsection, we present the quantum-information-theoretic tool used in
the lower-bound analysis. This is independent of the tree
search problem and considers only a sequential hypothesis-testing problem
for an unknown phase.

Let
\[
P_\theta
=
|0\rangle\langle 0|
+
e^{i\theta}|1\rangle\langle 1|,
\qquad
\theta\in[-\pi,\pi),
\]
be a one-qubit phase oracle. Fix a $\theta_{\mathrm c}$ and a measurable set
$\Theta\subseteq[-\pi,\pi)$ containing $\theta_{\mathrm c}$.
We consider the testing problem
\[
H_0:\theta=\theta_{\mathrm c}
\qquad\text{vs}\qquad
H_1:\theta\in[-\pi,\pi)\setminus\Theta.
\]
The set $\Theta$ plays the role of an \emph{indifference region}:
when
$
\theta\in\Theta\setminus\{\theta_{\mathrm c}\},
$
the testing procedure is not required to make either decision
correctly. The procedure must distinguish $\theta_{\mathrm c}$ from every phase outside
$\Theta$.

Let $\operatorname{Leb}$ denote the Lebesgue measure on $\mathbb{R}$ (i.e., length). Define
$
s:=\operatorname{Leb}(\Theta).
$
A smaller $s$ places admissible alternatives closer to
the reference phase and makes the two hypotheses harder to distinguish.
For example, if
$
\Theta
=
[\theta_{\mathrm c}-h,\theta_{\mathrm c}+h],
$
then the test is required to distinguish $\theta_{\mathrm c}$ from
all phases satisfying
$
|\theta-\theta_{\mathrm c}|>h,
$
and $s=2h$.

Consider an arbitrary sequential quantum procedure.  Between oracle
queries, the procedure may apply arbitrary $\theta$-independent
quantum operations and measurements, and its subsequent operations may
depend on previous measurement outcomes.  The procedure may query
$P_\theta$, $P_\theta^\dagger$ and stops
at a random time.  Upon stopping, it outputs
$D\in\{0,1\}$, where $D=0$ corresponds to accepting $H_0$ and $D=1$
corresponds to accepting $H_1$.

For a prescribed error level $\zeta\in(0,1/4]$, we require
$
\mathbb P_{\theta_{\mathrm c}}(D=1)\le\zeta
$
and
$
\sup_{\theta\notin\Theta}
\mathbb P_\theta(D=0)\le\zeta.
$

Let $N$ denote the random number of phase-oracle queries made by the
procedure before it stops. The quantity
$
\mathbb E_{\theta_{\mathrm c}}[N]
$
is therefore the average number of phase-oracle queries used when the
true phase is $\theta_{\mathrm c}$.  

\begin{lemma}
\label{lem:phase-testing}
Under the testing set above, if
$s\in(0, \frac{\pi}{2}],
$
then
\[
\mathbb E_{\theta_{\mathrm c}}[N]
\ge
\frac{1}{16s}\log\frac{1}{4\zeta}
-\frac12.
\]
\end{lemma}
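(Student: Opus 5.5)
The plan is to reduce the adaptive, randomly stopped procedure to a family of nonnegative trigonometric polynomials in $\theta$, and then combine a Remez-type inequality with a log-sum (change-of-measure) argument. For $n\ge 0$ and $d\in\{0,1\}$, let $q_{n,d}(\theta):=\mathbb P_\theta(N=n,\,D=d)$.

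\textbf{Step 1 (polynomial structure).} Purify every intermediate measurement and classical control, so that conditional on a branch the procedure is a sequence of $\theta$-independent unitaries interleaved with applications of $P_\theta$ or $P_\theta^\dagger$. Each such call multiplies amplitudes by a factor from $\{1,e^{\pm i\theta}\}$. Hence, on any branch that has made exactly $n$ queries, every amplitude is a trigonometric polynomial of degree at most $n$. Consequently $q_{n,d}(\theta)$ is a sum of squared moduli of such amplitudes, i.e.\ a nonnegative trigonometric polynomial of degree at most $n$. (If degree bookkeeping forces a half-integer shift, this is the likely origin of the $-\tfrac12$ in the statement.)

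\textbf{Step 2 (pointwise-to-average Remez bound).} I will prove that every nonnegative trigonometric polynomial $q$ of degree at most $n$ satisfies
\[
q(\theta_{\mathrm c})\le e^{c\,(n+1/2)\,s}\,\frac{1}{\operatorname{Leb}(\Theta^c)}\int_{\Theta^c} q(\theta)\,d\theta,
\qquad s=\operatorname{Leb}(\Theta)\le \tfrac{\pi}{2},
\]
with an absolute constant $c$ (the target is $c=16$, or better). The route I would try first: write $q=|p|^2$ by the Fej\'er--Riesz theorem, with $p$ an algebraic polynomial of degree $n$ on the unit circle. Then apply the Nazarov--Tur\'an/Remez inequality for exponential sums, which controls $\sup_{\mathbb T}|p|$ by $e^{O(ns)}$ times the $L^2$-average of $|p|$ over a set of measure at least $2\pi-s$. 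Alternatively, use the classical trigonometric Remez inequality $\sup|p|\le e^{4ns}\sup_{\Theta^c}|p|$ combined with a Bernstein-type step that passes from sup to average on $\Theta^c$.

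This is the main obstacle. The estimate has to be in averaged, not sup, form, because averages are additive across the components $q_{n,d}$, whereas sups are not. Getting a constant compatible with $1/16$ is the delicate part.

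\textbf{Step 3 (log-sum argument).} Let $\nu$ be the uniform probability on $\Theta^c$, and define $a_{n,d}=q_{n,d}(\theta_{\mathrm c})$ and $b_{n,d}=\int q_{n,d}\,d\nu$. Both $(a_{n,d})$ and $(b_{n,d})$ are probability vectors, assuming the procedure stops a.s.; otherwise the bound is trivial. By Step 2, $\log(a_{n,d}/b_{n,d})\le c(n+\tfrac12)s$. Therefore
\[
\mathrm{KL}(a\,\|\,b)=\sum_{n,d}a_{n,d}\log\frac{a_{n,d}}{b_{n,d}}\le c\,s\,\bigl(\mathbb E_{\theta_{\mathrm c}}[N]+\tfrac12\bigr).
\]

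The error constraints give $\sum_n a_{n,0}\ge 1-\zeta$. Averaging over $\nu$ gives $\sum_n b_{n,0}\le\zeta$. By the data-processing (log-sum) inequality for the coarse-graining $(n,d)\mapsto d$,
\[
\mathrm{KL}(a\|b)\ge \mathrm{kl}(1-\zeta,\zeta)\ge \log\frac{1}{4\zeta}
\]
for $\zeta\le 1/4$. Here the last inequality is the standard bound $\mathrm{kl}(x,1-x)\ge\log\frac{1}{2.4(1-x)}$, or a direct check.

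Combining the two displays yields $\mathbb E_{\theta_{\mathrm c}}[N]\ge \frac{1}{cs}\log\frac1{4\zeta}-\frac12$, which is the claim with $c=16$.
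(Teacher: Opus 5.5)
Your Step 2 is false for every absolute constant $c$, and the failure is structural rather than a matter of constants. Take $\theta_{\mathrm c}=0$, $\Theta=[-s/2,s/2]$, and the Fej\'er kernel $F_n(\theta)=\sum_{|k|\le n}\bigl(1-\frac{|k|}{n+1}\bigr)e^{ik\theta}$. This is a nonnegative trigonometric polynomial of degree $n$ with $F_n(0)=n+1$ and $\int_{-\pi}^{\pi}F_n\,d\theta=2\pi$. Its average over $\Theta^c$ is therefore at most $2\pi/(2\pi-s)\le 4/3$. Step 2 would then force $n+1\le \frac{4}{3}e^{c(n+1/2)s}$, which fails for $s=1/n$ and $n$ large.

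More generally, take $p(z)=\sum_{k=0}^{n}(e^{-i\theta_{\mathrm c}}z)^k$. This shows that the optimal constant in Step 2, namely $\sup_{\deg p\le n}|p(e^{i\theta_{\mathrm c}})|^2/\mathrm{avg}_{\Theta^c}|p|^2$, is at least $\frac{3}{4}(n+1)$; this is the sharpness of Nikolskii's inequality. So no route, whether Nazarov--Tur\'an or sup-Remez plus a Bernstein step, can give a bound of the form $e^{O(ns)}$. Even granting the best one could hope for, a factor $C(n+1)e^{O(ns)}$, Step 3 only yields
\[
\mathrm{kl}(1-\zeta,\zeta)\le c\,s\bigl(\mathbb E_{\theta_{\mathrm c}}[N]+\frac{1}{2}\bigr)+\log\bigl(C(\mathbb E_{\theta_{\mathrm c}}[N]+1)\bigr).
\]
For fixed $\zeta$ and $s\to 0$, this inequality is compatible with bounded $\mathbb E_{\theta_{\mathrm c}}[N]$. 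So the averaged change-of-measure route cannot recover the $1/s$ scaling the lemma asserts.

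The missing idea is that you do not need additivity of the error constraint at all. The paper applies the sup-form trigonometric Remez inequality to a single cumulative quantity, $a_n(\theta)=\mathbb P_\theta(D=0,\,N\le n)=\sum_{k\le n}q_{k,0}(\theta)$. This is a trigonometric polynomial of degree at most $2n$, and it satisfies $0\le a_n(\theta)\le \mathbb P_\theta(D=0)\le\zeta$ pointwise for every $\theta\notin\Theta$. The composite-alternative constraint is thus used as a sup and never averaged. Remez gives $a_n(\theta_{\mathrm c})\le\zeta e^{8ns}$. With $n=\lfloor\frac{1}{8s}\log\frac{1}{4\zeta}\rfloor$ this is at most $\frac14$, so $\mathbb P_{\theta_{\mathrm c}}(N>n)\ge\mathbb P_{\theta_{\mathrm c}}(D=0)-a_n(\theta_{\mathrm c})\ge 1-\zeta-\frac14\ge\frac12$, and Markov gives $\mathbb E_{\theta_{\mathrm c}}[N]\ge n/2$. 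The $-\frac12$ comes from the floor, not from a degree shift.

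Separately, your Step 1 justification is a non sequitur as written. Squaring an amplitude with exponents in $[-n,n]$ gives degree up to $2n$. Getting degree $n$ requires noting that each query contributes exponents in a window of length one; the paper simply uses $2n$.
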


If $\zeta>1/4$, then
\[
\frac{1}{16s}\log\frac{1}{4\zeta}-\frac12<0,
\]
and the claimed bound follows immediately.

For a fixed integer $n\ge 0$, define
$
a_n(\theta)
:=
\mathbb P_\theta(D=0,\;N\le n),
$
which is the acceptance probability of the test truncated after $n$ queries.

By the polynomial method for quantum phase queries,
$a_n(\theta)$ is a real trigonometric polynomial of degree at most
$2n$; see, e.g., \citet{beals2001quantum,mande2026tight}. Moreover, for every
$\theta\notin\Theta$,
\[
a_n(\theta)
\le
\mathbb P_\theta(D=0)
\le
\zeta.
\]
Since the set $\Theta$ has total angular length $s$, a
trigonometric Remez inequality gives
$
a_n(\theta_{\mathrm c})
\le
\zeta e^{8ns};
$
see, e.g., \citet{borwein2012polynomials,mande2026tight}.

Choose
\[
n
=
\left\lfloor
\frac{1}{8s}\log\frac{1}{4\zeta}
\right\rfloor.
\]
Then
$
\zeta e^{8ns}\le \frac14,
$
and hence
$
a_n(\theta_{\mathrm c})\le \frac14.
$
On the other hand, correctness under the null hypothesis implies
$
\mathbb P_{\theta_{\mathrm c}}(D=0)\ge 1-\zeta.
$
Therefore,
\begin{align*}
\mathbb P_{\theta_{\mathrm c}}(N>n)
&\ge
\mathbb P_{\theta_{\mathrm c}}(D=0,\;N>n)\\
&=
\mathbb P_{\theta_{\mathrm c}}(D=0)
-
a_n(\theta_{\mathrm c})\\
&\ge
1-\zeta-\frac14\\
&\ge
\frac12,
\end{align*}
where the last inequality uses $\zeta\le 1/4$.

Consequently,
\[
\mathbb E_{\theta_{\mathrm c}}[N]
\ge
n\,\mathbb P_{\theta_{\mathrm c}}(N>n)
\ge
\frac n2.
\]
Finally, using $\lfloor x\rfloor\ge x-1$,
\[
\mathbb E_{\theta_{\mathrm c}}[N]
\ge
\frac{1}{16s}\log\frac{1}{4\zeta}
-\frac12.
\]

\subsection{Lower Bound Analysis}
In this subection, we reduce the quantum MCTS identification problem to a sequential quantum phase hypothesis-testing problem and use the Lemma \ref{lem:phase-testing} to derive the query complexity lower bound.

For any leaf-mean vector $\nu\in(0,1)^{L}$, define the set of $\epsilon$-optimal moves at the root $s_0$ by
\begin{equation}
    \mathcal M_\epsilon(\nu)
    :=
    \left\{
        s\in\mathcal C(s_0):
        V_{s_0}(\nu)-V_s(\nu)\le\epsilon
    \right\}.
\label{eq:epsilon-optimal-root-set}
\end{equation}

The reduction starts by considering a family of leaf-mean vectors
$
    \widetilde\mu(\theta),
    \theta\in[-\pi,\pi),
$
constructed so that, for some reference phase $\theta_{\mathrm c}$,
\begin{equation}
    \widetilde\mu(\theta_{\mathrm c})=\mu.
\label{eq:phase-family-center}
\end{equation}
If
\begin{equation}
    \mathcal M_\epsilon(\mu)
    \cap
    \mathcal M_\epsilon\bigl(\widetilde\mu(\theta)\bigr)
    = \emptyset,
\label{eq:decision-changing-instance}
\end{equation}
then no root move is simultaneously a valid recommendation for the
original instance and the perturbed instance. Therefore, any algorithm
that is $(\epsilon,\delta)$-correct on both instances must distinguish
between them.

Define
\begin{equation}
    \Theta
    :=
    \left\{
        \theta\in[-\pi,\pi):
        \mathcal M_\epsilon(\mu)
        \cap
        \mathcal M_\epsilon\bigl(\widetilde\mu(\theta)\bigr)
        \neq \emptyset
    \right\}.
\label{eq:tree-induced-indifference}
\end{equation}
At $\theta=\theta_{\mathrm c}$ the instance is $\mu$, while every
$\theta\notin\Theta$ produces an instance whose valid root
recommendations are disjoint from those of $\mu$.  Consequently, a
correct tree-search algorithm can be used to test
\[
    H_0:\theta=\theta_{\mathrm c}
    \qquad\text{vs}\qquad
    H_1:\theta\notin\Theta.
\]

In the Monte Carlo tree search problem, changing the mean of a single leaf does not necessarily change the optimal move at the root, because its effect may be masked by the MAX--MIN operations along the path to the root. Therefore, we group leaf nodes into blocks in Definition \ref{def:pivotal-block} and consider joint perturbations of the leaves within each block, so that the perturbation can change the set of $\epsilon$-optimal moves at the root. 

\begin{definition}
\label{def:pivotal-block}
Fix a tree instance
$
    \mu=(\mu_\ell)_{\ell\in\mathcal L},
$
an accuracy level $\epsilon\in(0,1]$, and a constant $C>0$.
A collection of nonempty leaf sets
$
    \mathcal B_1,\ldots,\mathcal B_m\subseteq\mathcal L
$
is called a \emph{disjoint $C$-pivotal-block family} at $\mu$ if
the blocks are pairwise disjoint and there exists a common reference
phase $\theta_{\mathrm c}$ such that, for every block $\mathcal B_j$,
there exists a valid leaf-mean vectors
\[
    \mu^{(j)}(\theta)
    =
    \bigl(\mu^{(j)}_\ell(\theta)\bigr)_{\ell\in\mathcal L},
    \qquad
    \theta\in[-\pi,\pi),
\]
satisfying the following properties.

\begin{enumerate}

    \item
    Under $\theta_c$, the perturbed instance is exactly the
    original instance:
    $
        \mu^{(j)}(\theta_{\mathrm c})=\mu.
    $
    For an arbitrary phase $\theta$, every leaf outside the block
    remains unchanged:
    $
        \mu^{(j)}_\ell(\theta)
        =
        \mu_\ell,
        \forall\,\ell\notin\mathcal B_j.
    $
    Hence, when $\theta\neq\theta_{\mathrm c}$, only the means of
    leaves in $\mathcal B_j$ are allowed to differ from those of the
    original instance. 

    \item
    For every $\ell\in\mathcal B_j$, let
    $U^{(j)}_\ell(\theta)$ denote the quantum leaf oracle
    corresponding to the Bernoulli mean
    $\mu^{(j)}_\ell(\theta)$.
    One call to $U^{(j)}_\ell(\theta)$, its inverse, or a controlled
    version can be implemented using at most one call to
    $P_\theta$, its inverse, or a controlled version, together with
    quantum operations independent of $\theta$.

    \item
    Define the decision-preserving region
    \[
        \Theta_j
        :=
        \left\{
            \theta\in[-\pi,\pi):
            \mathcal M_\epsilon(\mu)
            \cap
            \mathcal M_\epsilon\bigl(\mu^{(j)}(\theta)\bigr)
            \neq\emptyset
        \right\},
    \]
    and let
    $
        d_{\mathcal B_j}
        :=
        \min_{\ell\in\mathcal B_j}
        d_{\ell,\epsilon}.
    $
    The perturbation family satisfies
    \[
        0<
        \operatorname{Leb}(\Theta_j)
        \le
        \min\left\{
            \frac{\pi}{2},
            C d_{\mathcal B_j}
        \right\}.
    \]

\end{enumerate}
\end{definition}

Let
\[
    \mathcal L_{\mathrm{piv}}
    :=
    \bigcup_{j=1}^{m}\mathcal B_j,
    \qquad
    b
    :=
    \max_{1\le j\le m}|\mathcal B_j|.
\]
For $x\in\mathbb R$, write
$
    [x]_+:=\max\{x,0\}.
$

\begin{theorem}
\label{thm:structural-quantum-lower}
Suppose that $\mu$ admits a disjoint $C$-pivotal-block family with $C$ and $b$ bounded by universal constants. For $\delta\in(0,1/4]$, the total number of leaf-oracle
queries $\tau$ satisfies
\[
    \mathbb E_\mu[\tau]
    =
    \Omega\left(
        \log\frac{1}{\delta}
        \sum_{\ell\in\mathcal L_{\mathrm{piv}}}
        \frac{1}{d_{\ell,\epsilon}}
    \right).
\]
\end{theorem}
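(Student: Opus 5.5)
Fix the disjoint $C$-pivotal-block family $\mathcal B_1,\ldots,\mathcal B_m$ with reference phase $\theta_{\mathrm c}$ from Definition~\ref{def:pivotal-block}, and let $\tau_j$ be the number of queries the algorithm makes to leaves in $\mathcal B_j$ (to $U_\ell$, $U_\ell^\dagger$, or controlled versions). Since the blocks are disjoint, $\tau\ge\sum_{j=1}^m\tau_j$. The plan is to show, for each $j$ separately, that
\[
\mathbb E_\mu[\tau_j]\ \ge\ \frac{1}{16\,\operatorname{Leb}(\Theta_j)}\log\frac{1}{4\delta}-\frac12 .
\]
Then I will convert this into the per-leaf sum.

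\emph{Reduction to phase testing.} Fix $j$ and an $(\epsilon,\delta)$-correct algorithm $\mathcal A$. I build a sequential phase-testing procedure with access to $P_\theta$ as follows. The procedure runs $\mathcal A$ and simulates its oracle calls. A call to a leaf $\ell\notin\mathcal B_j$ is answered by the fixed, $\theta$-independent unitary $U_\ell$ for the mean $\mu_\ell$; this is legitimate because property~1 says those leaves are unchanged. A call to $\ell\in\mathcal B_j$ is answered by $U^{(j)}_\ell(\theta)$, which by property~2 costs at most one call to $P_\theta$, its inverse, or a controlled version, plus $\theta$-independent operations. When $\mathcal A$ stops with recommendation $\widehat s$, the procedure outputs $D=0$ if $\widehat s\in\mathcal M_\epsilon(\mu)$ and $D=1$ otherwise.

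The simulated run under phase $\theta$ is exactly $\mathcal A$ on the instance $\mu^{(j)}(\theta)$. This gives three facts:
\begin{itemize}
\item Under $\theta_{\mathrm c}$ the instance is $\mu$, so $\mathbb P_{\theta_{\mathrm c}}(D=1)\le\delta$.
\item For $\theta\notin\Theta_j$, the sets $\mathcal M_\epsilon(\mu)$ and $\mathcal M_\epsilon(\mu^{(j)}(\theta))$ are disjoint. Hence $D=0$ forces an incorrect answer on $\mu^{(j)}(\theta)$, and so $\mathbb P_\theta(D=0)\le\delta$.
\item Also $\theta_{\mathrm c}\in\Theta_j$, since $\mathcal M_\epsilon(\mu)$ is nonempty.
\end{itemize}
The number $N$ of phase queries satisfies $N\le\tau_j$. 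Property~3 gives $0<\operatorname{Leb}(\Theta_j)\le\pi/2$. Lemma~\ref{lem:phase-testing} with $\zeta=\delta\le1/4$ and $s=\operatorname{Leb}(\Theta_j)$ then yields the displayed bound.

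\emph{Summation.} Property~3 also gives $\operatorname{Leb}(\Theta_j)\le C d_{\mathcal B_j}$, so
\[
\mathbb E_\mu[\tau_j]\ \ge\ \frac{1}{16C d_{\mathcal B_j}}\log\frac1{4\delta}-\frac12 .
\]
Because $d_{\mathcal B_j}$ is the minimum of $d_{\ell,\epsilon}$ over the block, for every $\ell\in\mathcal B_j$ we have
\[
\frac{1}{d_{\mathcal B_j}}\ \ge\ \frac{1}{d_{\ell,\epsilon}}, \qquad\text{hence}\qquad \frac{1}{d_{\mathcal B_j}}\ \ge\ \frac1b\sum_{\ell\in\mathcal B_j}\frac1{d_{\ell,\epsilon}} .
\]
Summing over $j$ and using that $C$ and $b$ are constants gives
\[
\mathbb E_\mu[\tau]\ \ge\ c\log\frac1{4\delta}\sum_{\ell\in\mathcal L_{\mathrm{piv}}}\frac1{d_{\ell,\epsilon}}-\frac m2 .
\]

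\emph{Main obstacle: the additive term and the constant inside the logarithm.} The additive $-\tfrac12$ per block, and the $4\delta$ in the logarithm, must both be absorbed. When the main term per block exceeds $1$, the subtraction costs at most a constant factor. Otherwise I fall back on the trivial bound $\mathbb E_\mu[\tau_j]\ge1$: it holds because $\mathcal A$ cannot be correct on both $\mu$ and a decision-changing $\mu^{(j)}(\theta)$ without querying the block, and $\Theta_j$ misses a set of positive measure since $\operatorname{Leb}(\Theta_j)\le\pi/2<2\pi$.

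The $\log\frac1{4\delta}$ versus $\log\frac1\delta$ issue matters only when $\delta$ is close to $1/4$. That regime I handle with the same fallback: there I combine $\mathbb E_\mu[\tau_j]\ge1$ with $d_{\ell,\epsilon}\ge\epsilon$, and I may need $\delta$ bounded away from $1/4$ or a constant loss. I expect this bookkeeping to be the fiddly step. The conceptual core is the exact simulation in the reduction, where property~2 must also cover the controlled and inverse oracle calls made by $\mathcal A$.
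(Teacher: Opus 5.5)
Your reduction and summation are the paper's proof. For each block you run the algorithm on $\mu^{(j)}(\theta)$, threshold $\widehat s$ by membership in $\mathcal M_\epsilon(\mu)$, and apply Lemma~\ref{lem:phase-testing} with $s=\operatorname{Leb}(\Theta_j)\le C d_{\mathcal B_j}$; then you use $1/d_{\mathcal B_j}\ge\frac1b\sum_{\ell\in\mathcal B_j}1/d_{\ell,\epsilon}$ and $\tau\ge\sum_j\tau_j$. The paper stops at $\frac{1}{16bC}\log\frac{1}{4\delta}\sum_{\ell\in\mathcal L_{\mathrm{piv}}}\frac{1}{d_{\ell,\epsilon}}-\frac m2$ and declares the $\Omega(\cdot)$ bound. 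So the bookkeeping you flag is a real hole that the paper shares: at $\delta=1/4$ that bound is vacuous.

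\textbf{First patch (the $-\frac12$ per block).} This works, but with a corrected constant. The event that no block leaf is queried has the same law under $\mu$ and under $\mu^{(j)}(\theta)$ for $\theta\notin\Theta_j$. On that event the recommendation is wrong for one of the two instances, so $\mathbb P_\mu(\tau_j\ge1)\ge1-2\delta$. This gives $\mathbb E_\mu[\tau_j]\ge\frac12$, not $\ge1$. With $X=\frac{1}{16Cd_{\mathcal B_j}}\log\frac{1}{4\delta}$ you then have $\max\{X-\frac12,\frac12\}\ge X/2$.

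\textbf{Second patch ($\delta$ near $\frac14$).} This does not work. As $\delta\uparrow\frac14$ you still need $\mathbb E_\mu[\tau_j]\gtrsim1/d_{\mathcal B_j}$, which is unbounded, and a constant lower bound cannot supply it. The inequality $d_{\ell,\epsilon}\ge\epsilon$ only gives $1/d_{\ell,\epsilon}\le1/\epsilon$, which points the wrong way.

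\textbf{Fix inside the lemma.} Repair the lemma's proof rather than patching afterwards. First, use $\mathbb E[N]\ge(n+1)\,\mathbb P(N\ge n+1)$ instead of $n\,\mathbb P(N>n)$; this removes the $-\frac12$. Second, choose $n=\lfloor\frac{1}{8s}\log\frac{1-\zeta}{2\zeta}\rfloor$, which is $\ge0$ for $\zeta\le\frac13$. Then $\zeta e^{8ns}\le\frac{1-\zeta}{2}$, so $\mathbb P_{\theta_{\mathrm c}}(N>n)\ge(1-\zeta)-\frac{1-\zeta}{2}\ge\frac38$. This yields $\mathbb E_{\theta_{\mathrm c}}[N]\ge\frac{3}{64s}\log\frac{3}{8\zeta}$. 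For $\zeta\le\frac14$ we have $\log\frac{3}{8\zeta}\ge(1-\log_4\frac83)\log\frac1\zeta$, so $\mathbb E_{\theta_{\mathrm c}}[N]\ge\frac{c}{s}\log\frac1\zeta$ with a universal $c>0$. Your summation then gives $\mathbb E_\mu[\tau]\ge\frac{c}{bC}\log\frac1\delta\sum_{\ell\in\mathcal L_{\mathrm{piv}}}\frac1{d_{\ell,\epsilon}}$ for every $\delta\in(0,\frac14]$, with no case analysis and no additive term.
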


For each leaf $\ell$, let $N_\ell$ denote the number of queries to its
oracle before the algorithm stops. For block $\mathcal B_j$, define
$
    N_j
    :=
    \sum_{\ell\in\mathcal B_j}N_\ell.
$

Fix one block $\mathcal B_j$ and run the tree algorithm on the family
$\mu^{(j)}(\theta)$ from Definition~\ref{def:pivotal-block}. Convert
its recommendation $\widehat{s}$ into a binary decision by setting
\[
    D_j=0
    \quad\Longleftrightarrow\quad
    \widehat{s}\in\mathcal M_\epsilon(\mu).
\]

At $\theta=\theta_{\mathrm c}$, $(\epsilon,\delta)$-correctness gives
$
    \mathbb P_{\theta_{\mathrm c}}(D_j=1)
    \le
    \delta.
$

For every $\theta\notin\Theta_j$, the sets
$\mathcal M_\epsilon(\mu)$ and $\mathcal M_\epsilon(\mu^{(j)}(\theta))
$
are disjoint. Therefore, whenever the algorithm is correct on the
perturbed instance, its recommendation cannot belong to
$\mathcal M_\epsilon(\mu)$, and
$
    \sup_{\theta\notin\Theta_j}
    \mathbb P_\theta(D_j=0)
    \le
    \delta.
$
Thus, the tree algorithm induces the phase test of
Lemma~\ref{lem:phase-testing}.

By Definition~\ref{def:pivotal-block}, the induced test can be
simulated using no more than $N_j$ phase-oracle queries. Therefore,
Lemma~\ref{lem:phase-testing} and
$
    \operatorname{Leb}(\Theta_j)
    \le
    C d_{\mathcal B_j}
$
imply
\begin{equation}
    \mathbb E_\mu[N_j]
    \ge
    \frac{1}{16C d_{\mathcal B_j}}
    \log\frac{1}{4\delta}
    -
    \frac12,
\label{eq:block-lower-in-d}
\end{equation}
whenever $\delta\le1/4$.
If $\delta>1/4$, then
$
    \log\frac{1}{4\delta}<0.
$
The theorem then
follows immediately. 

Since
$
    d_{\mathcal B_j}
    =
    \min_{\ell\in\mathcal B_j}
    d_{\ell,\epsilon},
$
we have
\begin{align*}
    \frac{1}{d_{\mathcal B_j}}
    &=
    \max_{\ell\in\mathcal B_j}
    \frac{1}{d_{\ell,\epsilon}}
    \ge
    \frac{1}{|\mathcal B_j|}
    \sum_{\ell\in\mathcal B_j}
    \frac{1}{d_{\ell,\epsilon}}
    \ge
    \frac{1}{b}
    \sum_{\ell\in\mathcal B_j}
    \frac{1}{d_{\ell,\epsilon}}.
\end{align*}

Hence,
$
    \tau
    \ge
    \sum_{j=1}^{m}N_j.
$
Taking expectations, summing
\eqref{eq:block-lower-in-d} over the blocks, and applying the preceding
inequality gives
\[
    \mathbb E_\mu[\tau]
    \ge
    \frac{1}{16bC}
    \log\frac{1}{4\delta}
    \sum_{\ell\in\mathcal L_{\mathrm{piv}}}
    \frac{1}{d_{\ell,\epsilon}}
    -
    \frac{m}{2}.
\]

Define 
\[
    H_{\mathrm Q}(\mu,\epsilon)
    :=
    \sum_{\ell\in\mathcal L}
    \frac{1}{d_{\ell,\epsilon}}.
\]
We call an tree instance \emph{uniformly pivotal} if it admits a disjoint
pivotal-block family that covers $\mathcal L$ and whose perturbation
constant $C$ and maximum block size $b$ are bounded by universal constants.
We use $\widetilde O$ to suppress logarithmic factors.

\begin{corollary}
\label{cor:matching-quantum-bounds}
On every uniformly pivotal instance, any
$(\epsilon,\delta)$-correct quantum algorithm satisfies
\[
    \mathbb E_\mu[\tau]
    =
    \Omega\left(
        H_{\mathrm Q}(\mu,\epsilon)
        \log\frac{1}{\delta}
    \right),
\]
whereas QMCTS uses
\[
    \widetilde O\left(
        H_{\mathrm Q}(\mu,\epsilon)
        \log\frac{1}{\delta}
    \right)
\]
queries with probability at least $1-\delta$. Therefore, the expected lower bound and the high-probability upper bound have the same leading instance dependence up to logarithmic factors.
\end{corollary}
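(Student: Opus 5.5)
The corollary has two halves, and I would prove each by specializing a bound already established. In both, the main work is bookkeeping that turns the earlier bounds into the common quantity $H_{\mathrm Q}(\mu,\epsilon)\log\frac1\delta$.

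\textbf{Lower bound.} A uniformly pivotal instance admits a disjoint pivotal-block family with $\mathcal L_{\mathrm{piv}}=\mathcal L$, and with $C$ and $b$ bounded by universal constants. Hence the hypotheses of Theorem~\ref{thm:structural-quantum-lower} hold. Since $\sum_{\ell\in\mathcal L_{\mathrm{piv}}}1/d_{\ell,\epsilon}=H_{\mathrm Q}(\mu,\epsilon)$, the theorem gives the claimed $\Omega(H_{\mathrm Q}\log\frac1\delta)$.

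The one point to check is the additive $-m/2$ term in the explicit form of the bound, together with the gap between $\log\frac1{4\delta}$ and $\log\frac1\delta$. I would handle them as follows.
\begin{itemize}
\item Each block satisfies $\operatorname{Leb}(\Theta_j)\le\pi/2$ and $d_{\ell,\epsilon}\le1$. Hence each block contributes at least $|\mathcal B_j|\ge1$ to $H_{\mathrm Q}$, and so $m\le L\le H_{\mathrm Q}$.
\item For $\delta$ below a small constant (say $\delta\le1/64$), we have $\log\frac1{4\delta}\ge\frac12\log\frac1\delta$. Then $\frac{1}{16bC}\log\frac1{4\delta}H_{\mathrm Q}$ dominates $m/2$ once $\log\frac1\delta$ exceeds a constant multiple of $bC$. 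That constant is universal because $b$ and $C$ are.
\item For the remaining constant range of $\delta$, the bound is already implied by what Theorem~\ref{thm:structural-quantum-lower} asserts in $\Omega$ form. If needed, one can also use the trivial fact that any correct algorithm must query something; this gives $\tau\ge1$ but does not scale with $H_{\mathrm Q}$.
\end{itemize}
I would simply cite the theorem's $\Omega$ statement, noting that the constant absorbs this regime.

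\textbf{Upper bound.} I would invoke Theorem~\ref{thm: query-complexity}. With probability at least $1-\delta$, the number of queries is
\[
O\Bigl(\sum_\ell \tfrac{1}{d_{\ell,\epsilon}}\bigl[\log\tfrac{2L}{\delta}+2\log(2+\log_2\tfrac1{d_{\ell,\epsilon}})\bigr]\Bigr).
\]
I would split the bracket as $\log\frac1\delta+\log(2L)+2\log(2+\log_2\frac1{d_{\ell,\epsilon}})$.
\begin{itemize}
\item Since $d_{\ell,\epsilon}\ge\epsilon$, the last term is at most $2\log(2+\log_2\frac1\epsilon)$, which is logarithmic.
\item The term $\log(2L)$ is logarithmic in the instance size.
\item For $\delta<1/2$, $\log\frac1\delta\ge\log2>0$. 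So the bracket is at most $\log\frac1\delta\cdot\bigl(1+\tfrac{\log(2L)+2\log(2+\log_2(1/\epsilon))}{\log 2}\bigr)$, which is $\widetilde O(\log\frac1\delta)$.
\end{itemize}
Summing over $\ell$ gives $\widetilde O(H_{\mathrm Q}\log\frac1\delta)$.

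\textbf{Conclusion.} Comparing the two displays, both bounds have the same leading instance dependence $H_{\mathrm Q}(\mu,\epsilon)\log\frac1\delta$. They differ only by the suppressed logarithmic factors in $L$ and $1/\epsilon$, and by constants depending on $b$ and $C$. The only delicate step is the absorption of the $-m/2$ term in the lower bound; that is where I would spend care, using $m\le H_{\mathrm Q}$.
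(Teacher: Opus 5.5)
Your proposal is correct and matches the paper's proof, which simply specializes Theorem~\ref{thm:structural-quantum-lower} to $\mathcal L_{\mathrm{piv}}=\mathcal L$ for the lower bound and cites Theorem~\ref{thm: query-complexity} for the upper bound. Your extra bookkeeping is sound and more explicit than the paper: you use $m\le L\le H_{\mathrm Q}(\mu,\epsilon)$ to handle the $-m/2$ term, and $\log\frac1\delta\ge\log 2$ to absorb $\log(2L)$ and $\log(2+\log_2\frac1\epsilon)$ into the $\widetilde O$. One soft spot, which the paper shares, remains: for $\delta$ in a constant range the explicit bound $\frac{1}{16bC}\log\frac{1}{4\delta}H_{\mathrm Q}(\mu,\epsilon)-\frac m2$ can be vacuous. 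So no constant absorbs that regime; you are either reading $\Omega$ asymptotically as $\delta\to0$ or relying on the theorem's $\Omega$ claim as stated.
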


Uniform pivotality gives
$\mathcal L_{\mathrm{piv}}=\mathcal L$ with bounded block constants,
so Theorem~\ref{thm:structural-quantum-lower} yields the stated lower
bound. Theorem~\ref{thm: query-complexity} yields the upper bound.

\paragraph{Exact identification, best arm identification, and general trees.}

Although QMCTS is presented for $\epsilon \in(0,1]$, the
formulation and lower-bound analysis also cover exact identification
by setting $\epsilon=0$, provided $\Delta_\star>0$. For a depth-one
MAX tree, this reduces to fixed-confidence exact BAI, while $\epsilon \in(0,1]$ reduces to fixed-confidence
$\epsilon$-BAI. Thus, BAI is a special case of the
tree-search problem studied here.

\section{Classical MCTS Benchmark}
\label{sec:CMCTS}

\begin{theorem}
\label{thm:cmcts-correctness}
With probability at least $1-\delta$, CMCTS stops after finitely many
iterations and returns an $\epsilon$-optimal root move for $s_0$, i.e.,
\[
    \mathbb P\left(
        V_{s_0}-V_{\widehat{s}^{\mathrm C}}\le\epsilon
    \right)
    \ge 1-\delta.
\]
\end{theorem}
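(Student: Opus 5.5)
The plan mirrors the proof of Theorem~\ref{thm: corectness}. The only new ingredient is a classical good event built from Hoeffding's inequality with sample reuse.

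\textbf{Step 1 (classical good event).} For each leaf $\ell$ and iteration $r$, define
\[
F^{\mathrm C}_{\ell,r}:=\{\ell\in\mathcal L_{r-1}:|\widehat\mu^{\mathrm C}_{\ell,r}-\mu_\ell|>\alpha_r\},
\]
and set $\mathcal G^{\mathrm C}:=\bigcap_{r,\ell}(F^{\mathrm C}_{\ell,r})^c$. The subtlety is that $\widehat\mu^{\mathrm C}_{\ell,r}$ is the mean of all retained samples. The sample count $N_\ell$ at iteration $r$ equals $n_r$ exactly: $n_r$ is increasing in $r$ and $N_\ell$ only grows to $\max$ of past $n$'s, and a leaf active at $r$ was active at all earlier iterations.

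Because $N_\ell=n_r$ is deterministic, $\widehat\mu^{\mathrm C}_{\ell,r}$ is simply the average of the first $n_r$ i.i.d.\ Bernoulli draws from leaf $\ell$. To make this rigorous I will use a stack-of-samples construction: fix, for each leaf, an infinite i.i.d.\ sequence, and let the algorithm read its prefix. Then the event $\{|\bar X_{\ell,n_r}-\mu_\ell|>\alpha_r\}$ does not depend on adaptivity, and activity only decides whether the event is counted.

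Hoeffding with $n_r\ge\log(2/\eta_r)/(2\alpha_r^2)$ gives probability at most $\eta_r$ for each such event. The union bound over $\ell,r$ then gives $\mathbb P((\mathcal G^{\mathrm C})^c)\le\sum_r L\eta_r=\pi^2\delta/12<\delta$, exactly as in the proof of Lemma~\ref{lem: good event}.

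\textbf{Step 2 (propagation and safe elimination).} On $\mathcal G^{\mathrm C}$, every active-leaf estimate is within $\alpha_r$ of its mean. The arguments of Lemmas~\ref{lem:value-estimation} and~\ref{lem:safe-threshold} used only this fact, the max/min recursion, and $2\alpha_r=\gamma_r$, so they transfer verbatim with $\widehat V$ replaced by $\widehat V^{\mathrm C}$. This yields two facts: $|\widehat V^{\mathrm C}_{s,r}-V_s|\le\alpha_r$ for every active $s$, and every node retains an optimal child with its true values preserved.

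\textbf{Step 3 (termination and correctness).} Termination occurs by iteration $R_\epsilon=\min\{r:\gamma_r\le\epsilon\}$ deterministically. If the algorithm stops with a single root move left, that move is optimal, since an optimal root move always survives. Otherwise $\gamma_\tau\le\epsilon$, and the same three-term decomposition through an optimal active root move $s^\star_\tau$ applies:
\[
V_{s_0}-V_{\widehat s^{\mathrm C}}\le 2\alpha_\tau=\gamma_\tau\le\epsilon.
\]
Hence the success probability is at least $\mathbb P(\mathcal G^{\mathrm C})\ge1-\delta$.

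\textbf{Main obstacle.} The main obstacle is Step 1: justifying Hoeffding for reused, adaptively collected samples. The reduction to fixed prefixes of a predetermined i.i.d.\ stream, with the deterministic count $n_r$, is what resolves it.
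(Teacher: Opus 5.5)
Your proposal is correct and follows essentially the same route as the paper. Both arguments define a classical good event on a fixed i.i.d.\ stream $\overline X_{\ell,n_r}$ for each leaf, control it with Hoeffding's inequality via the choice of $n_r$, and apply a union bound giving $\frac{\delta}{2}\sum_r r^{-2}<\delta$. Both then transfer Lemmas~\ref{lem:value-estimation} and~\ref{lem:safe-threshold} and finish with the same three-term decomposition at the stopping iteration. Your explicit check that an active leaf holds exactly $N_\ell=n_r$ retained samples (because $n_r$ is increasing and active sets are nested) is a step the paper uses only implicitly, and it is a useful addition.
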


For each leaf $\ell$, let
$X_{\ell,1},X_{\ell,2},\ldots$ be an i.i.d. sequence of
Bernoulli observations with mean $\mu_\ell$. For $n\ge1$, define
\[
    \overline X_{\ell,n}
    :=
    \frac{1}{n}\sum_{k=1}^{n}X_{\ell,k}.
\]
Define the classical good event
\[
    \mathcal G_{\mathrm C}
    :=
    \bigcap_{r\ge1}\bigcap_{\ell\in\mathcal L}
    \left\{
        \left|\overline X_{\ell,n_r}-\mu_\ell\right|
        \le\alpha_r
    \right\}.
\]
By Hoeffding's inequality and the definition of $n_r$,
\[
    \mathbb P\left(
        \left|\overline X_{\ell,n_r}-\mu_\ell\right|>\alpha_r
    \right)
    \le
    2\exp(-2n_r\alpha_r^2)
    \le\eta_r.
\]
Consequently,
\[
    \mathbb P(\mathcal G_{\mathrm C}^{\mathrm c})
    \le
    \sum_{r\ge1}\sum_{\ell\in\mathcal L}\eta_r
    =
    \frac{\delta}{2}\sum_{r\ge1}\frac{1}{r^2}
    <\delta.
\]

Condition on $\mathcal G_{\mathrm C}$. Every active classical leaf
estimate has error at most $\alpha_r$. The proofs of
Lemmas~\ref{lem:value-estimation} and~\ref{lem:safe-threshold} therefore
apply with $\widehat{\mu}^{\mathrm C}_{\ell,r}$ and
$\widehat V^{\mathrm C}_{s,r}$ in place of their quantum counterparts.

Because $\gamma_r=2^{-r}$ and $\epsilon>0$, CMCTS stops no later than
the first iteration $r$ for which $\gamma_r\le\epsilon$. If it stops
with only one root move remaining, that move is optimal. Otherwise,
let $s_r^\star$ be an active optimal root move. Since CMCTS returns
the active root move with the largest propagated estimate,
\begin{align*}
    V_{s_0}-V_{\widehat{s}^{\mathrm C}}
    &\le
    \left|
        V_{s_r^\star}
        -\widehat V^{\mathrm C}_{s_r^\star,r}
    \right|
    +
    \left|
        \widehat V^{\mathrm C}_{\widehat{s}^{\mathrm C},r}
        -V_{\widehat{s}^{\mathrm C}}
    \right| \le
    2\alpha_r
    =
    \gamma_r
    \le
    \epsilon.
\end{align*}
Since $\mathbb P(\mathcal G_{\mathrm C})\ge1-\delta$, the result
follows.

\begin{theorem}
\label{thm:cmcts-query-complexity}
With probability at least $1-\delta$, the total number of leaf-oracle
queries used by CMCTS is
\begin{equation}
    O\left(
        \sum_{\ell\in\mathcal L}
        \frac{1}{d_{\ell,\epsilon}^2}
        \left[
            \log\frac{4L}{\delta}
            +
            2\log\left(
                2+\log_2\frac{1}{d_{\ell,\epsilon}}
            \right)
        \right]
    \right),
\label{eq:cmcts-query-upper-bound}
\end{equation}
where
$
    d_{\ell,\epsilon}
    =
    \Delta_\ell\vee\Delta_\star\vee\epsilon.
$
\end{theorem}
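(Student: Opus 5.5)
The plan is to argue on the classical good event $\mathcal G_{\mathrm C}$ and reuse the structure of the proof of Theorem~\ref{thm: query-complexity}. There are two differences. The per-leaf cost now comes from the Hoeffding sample size $n_r$ rather than the QMC bound. And because CMCTS reuses samples, the cost of a leaf is governed by its \emph{final} target $n_{R_\ell}$, not by a geometric sum over iterations.

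\textbf{Step 1: bound the last active iteration.} As in the QMCTS proof, let $R_\ell$ be the last iteration in which leaf $\ell$ is active and queried. On $\mathcal G_{\mathrm C}$, Lemmas~\ref{lem:value-estimation} and~\ref{lem:safe-threshold} hold with the classical estimates; this was already observed in the proof of Theorem~\ref{thm:cmcts-correctness}. The argument leading to \eqref{eq:last-query-threshold-bound} and \eqref{eq:last-query-round-bound} then carries over verbatim:
\begin{itemize}
\item every edge on the path to $\ell$ survives iteration $R_\ell-1$, so $\Delta_\ell\le 2\gamma_{R_\ell-1}$;
\item iteration $R_\ell-1$ is nonterminal, which gives $\epsilon<\gamma_{R_\ell-1}$;
\item at least two root moves survive, which gives $\Delta_\star\le 2\gamma_{R_\ell-1}$.
\end{itemize}
Combining these, $1/\gamma_{R_\ell}\le 4/d_{\ell,\epsilon}$ and $R_\ell\le 2+\log_2(1/d_{\ell,\epsilon})$. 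The case $R_\ell=1$ is handled exactly as before.

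\textbf{Step 2: bound the per-leaf sample count.} The counter $N_\ell$ is nondecreasing and is topped up to $\max_{r\le R_\ell} n_r$. The sequence $n_r$ is increasing in $r$, so the total number of samples drawn from $\ell$ is exactly $n_{R_\ell}$. Substituting $\alpha_r=\gamma_r/2$ and $\eta_r=\delta/(2Lr^2)$ gives
\[
n_{R_\ell}\le \frac{2}{\gamma_{R_\ell}^2}\left[\log\frac{4L}{\delta}+2\log R_\ell\right]+1 .
\]
Applying the bounds from Step 1 yields
\[
n_{R_\ell}\le \frac{32}{d_{\ell,\epsilon}^2}\left[\log\frac{4L}{\delta}+2\log\left(2+\log_2\frac{1}{d_{\ell,\epsilon}}\right)\right]+1 .
\]
The additive $+1$ is absorbed into the $O(\cdot)$, since $d_{\ell,\epsilon}\le1$ and the bracket is bounded below by a positive constant.

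\textbf{Step 3: sum over leaves.} Every query is charged to exactly one leaf, so summing the Step 2 bound over $\ell\in\mathcal L$ gives \eqref{eq:cmcts-query-upper-bound}. Since $\mathbb P(\mathcal G_{\mathrm C})\ge 1-\delta$, the bound holds with probability at least $1-\delta$.

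\textbf{Main obstacle.} The only delicate point is the event definition. $\mathcal G_{\mathrm C}$ controls $\overline X_{\ell,n_r}$ at the deterministic sample sizes $n_r$. The estimate used at iteration $r$ is the mean of $N_\ell=\max_{r'\le r}n_{r'}=n_r$ retained samples, so it coincides with $\overline X_{\ell,n_r}$. This holds because the sample sequence is fixed in advance and only its prefix length is chosen adaptively, and that is what justifies the union bound without any martingale argument. I would state this identification explicitly; everything else is a direct transcription of the QMCTS proof.
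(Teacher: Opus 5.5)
Your proposal is correct and follows the paper's proof essentially step for step. You reuse $1/\gamma_{R_\ell}\le 4/d_{\ell,\epsilon}$ and $R_\ell\le 2+\log_2(1/d_{\ell,\epsilon})$ from the QMCTS analysis, bound the final Hoeffding target $n_{R_\ell}$, and sum over leaves on $\mathcal G_{\mathrm C}$; your remarks that $n_r$ is increasing (so each leaf costs exactly $n_{R_\ell}$), that the additive $1$ is absorbed, and that the adaptively read prefix coincides with $\overline X_{\ell,n_r}$ make explicit points the paper leaves implicit.
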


Condition on the classical good event $\mathcal G_{\mathrm C}$, and let
$R_\ell$ be the final iteration in which leaf $\ell$ is queried.
The arguments in the proof of
Theorem~\ref{thm: query-complexity} give
\begin{equation}
    \frac{1}{\gamma_{R_\ell}}
    \le
    \frac{4}{d_{\ell,\epsilon}},
    \qquad
    R_\ell
    \le
    2+\log_2\frac{1}{d_{\ell,\epsilon}}.
\label{eq:cmcts-last-query-bounds}
\end{equation}

Using
$
    \alpha_r=\frac{\gamma_r}{2},
    \eta_r=\frac{\delta}{2Lr^2},
$
we obtain
\begin{align*}
    n_{R_\ell}
    &\le
    1+
    \frac{1}{2\alpha_{R_\ell}^2}
    \log\frac{2}{\eta_{R_\ell}} =
    1+
    \frac{2}{\gamma_{R_\ell}^2}
    \left[
        \log\frac{4L}{\delta}
        +
        2\log R_\ell
    \right].
\end{align*}
Applying~\eqref{eq:cmcts-last-query-bounds} yields
\[
    n_{R_\ell}
    \le
    1+
    \frac{32}{d_{\ell,\epsilon}^2}
    \left[
        \log\frac{4L}{\delta}
        +
        2\log\left(
            2+\log_2\frac{1}{d_{\ell,\epsilon}}
        \right)
    \right].
\]
Summing over all leaves proves
(\ref{eq:cmcts-query-upper-bound}). Finally,
$
    \mathbb P(\mathcal G_{\mathrm C})\ge1-\delta,
$
so the bound holds with probability at least $1-\delta$.

\section{Quantum Algorithm Design}
\subsection{Hybrid Algorithm}
In this subsection, we provide implementation details for the Hybrid MCTS algorithm.

\begin{algorithm}[htbp]
\scriptsize
\caption{Hybrid Monte Carlo Tree Search (Hybrid MCTS)}
\label{alg:hybrid-mcts}
\begin{algorithmic}[1]

\Require Finite MAX--MIN tree $\mathcal{T}$,
accuracy $\epsilon\in(0,1]$, and confidence
$\delta\in(0,1/2)$

\State $\mathcal{T}_0\gets\mathcal{T}$ and
$L\gets|\mathcal{L}|$
\State $N_{\ell}\gets0$ for every $\ell\in\mathcal{L}$
\State $n_0\gets0$ and $r\gets1$

\Comment{Classical phase}

\While{\textbf{true}}

    \State $\gamma_r\gets2^{-r}$,
    $\alpha_r\gets\gamma_r/2$, and
    $\eta_r\gets\delta/(2Lr^2)$

    \State
    $n_r\gets
    \left\lceil
    \log(2/\eta_r)/(2\alpha_r^2)
    \right\rceil$

    \If{a fresh QMC estimate with parameters
    $\alpha_r$ and $\eta_r$ uses fewer than
    $n_r-n_{r-1}$ queries}
        \State \textbf{break}
    \EndIf

    \ForAll{$\ell\in\mathcal{L}_{r-1}$}

        \State
        $\Delta_{\ell}\gets
        \max\{0,n_r-N_{\ell}\}$

        \State Query leaf $\ell$ for
        $\Delta_{\ell}$ additional observations and retain them

        \State
        $N_{\ell}\gets N_{\ell}+\Delta_{\ell}$

        \State Construct $\widehat{\mu}_{\ell,r}$
        as the empirical mean of all
        $N_{\ell}$ retained observations

    \EndFor

    \State Propagate $\widehat{V}_{s,r}$ upward using
    (\ref{eq:value-propagation})

    \ForAll{internal nodes $s$, in reverse topological order}
        \State Remove each child $c$ satisfying
        $\widehat{g}_r(s,c)>\gamma_r$
    \EndFor

    \State Let $\mathcal{T}_r$ be the remaining connected subtree,
    with active leaf set $\mathcal{L}_r$

    \State
    $\widehat{s}_r
    \in
    \operatorname*{arg\,max}_{s\in\mathcal{C}_r(s_0)}
    \widehat{V}_{s,r}$

    \If{$|\mathcal{C}_r(s_0)|=1$ \textbf{or}
        $\gamma_r\le\epsilon$}
        \State \Return $\widehat{s}_r$
    \EndIf

    \State $r\gets r+1$

\EndWhile

\Comment{Quantum phase}

\While{\textbf{true}}

    \State $\gamma_r\gets2^{-r}$,
    $\alpha_r\gets\gamma_r/2$, and
    $\eta_r\gets\delta/(2Lr^2)$

    \ForAll{$\ell\in\mathcal{L}_{r-1}$}
        \State
        $\widehat{\mu}_{\ell,r}
        \gets\operatorname{QMC}
        (U_\ell,\alpha_r,\eta_r)$
    \EndFor

    \State Propagate $\widehat{V}_{s,r}$ upward (\ref{eq:value-propagation})

    \ForAll{internal nodes $s$, in reverse topological order}
        \State Remove each child $c$ satisfying
        $\widehat{g}_r(s,c)>\gamma_r$
    \EndFor

    \State Let $\mathcal{T}_r$ be the remaining connected subtree,
    with active leaf set $\mathcal{L}_r$

    \State
    $\widehat{s}_r
    \in
    \operatorname*{arg\,max}_{s\in\mathcal{C}_r(s_0)}
    \widehat{V}_{s,r}$

    \If{$|\mathcal{C}_r(s_0)|=1$ \textbf{or}
        $\gamma_r\le\epsilon$}
        \State \Return $\widehat{s}_r$
    \EndIf

    \State $r\gets r+1$

\EndWhile

\end{algorithmic}
\end{algorithm}

\subsection{Query Complexity under Different Leaf Distributions}
\label{sec: distribution}
In this section, we provide additional numerical results to show that our algorithms also apply to more general leaf distributions. 

We consider two additional distributional settings in the synthetic experiment. The first is a bounded three-point distribution supported on $0$, $1/2$ and $1$, with probabilities chosen to match the mean in the Bernoulli setting. The second is a bounded five-point discrete distribution with equally spaced support points chosen to preserve the same mean. All other experimental settings remain unchanged. As shown in Figures~\ref{fig:three-point} and~\ref{fig:five-point}, the results are consistent with those obtained under Bernoulli distributions, which suggests that our methods are robust to the choice of output distribution.

\begin{figure}
    \centering
    \includegraphics[width=1.0\linewidth]{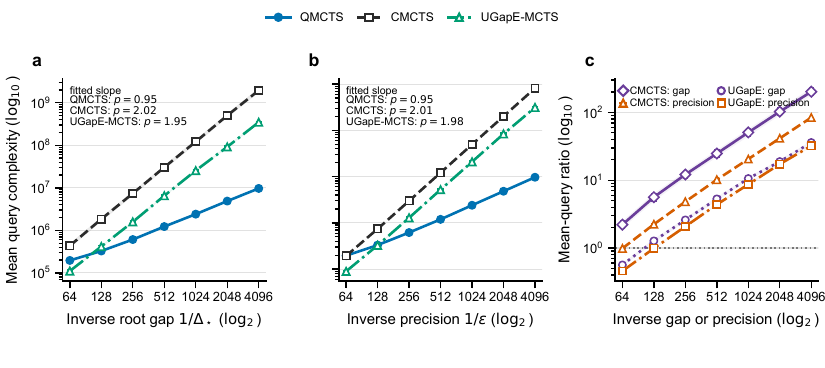}
    \caption{Query Complexity Scaling under A Bounded Three-point
Distribution}
    \label{fig:three-point}
\end{figure}

\begin{figure}
    \centering
    \includegraphics[width=1.0\linewidth]{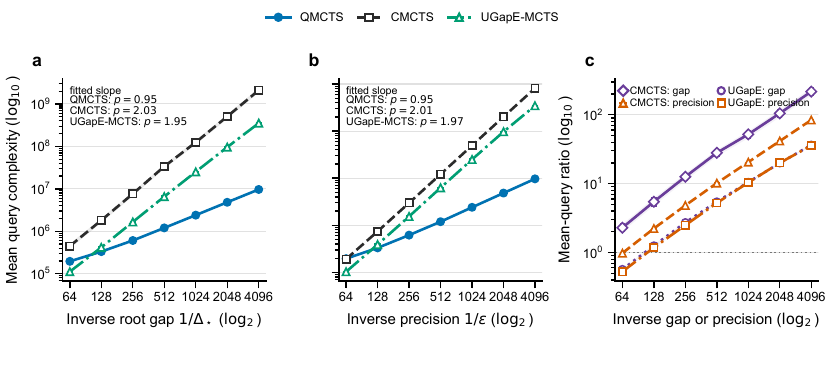}
    \caption{Query Complexity Scaling under A Bounded Five-point Discrete Uniform
Distribution}
    \label{fig:five-point}
\end{figure}




\end{document}